\newtheorem{proposition}{Proposition}
\newtheorem{conjecture}{Conjecture}
\newtheorem{corollary}{Corollary}
\newtheorem{lemma}{Lemma}
\newtheorem{claim}{Claim}
\newtheorem{theorem}{Theorem}
\newtheorem{definition}{Definition}
\theoremstyle{definition}
\newenvironment{customlemma}[1]
  {\innercustomlemma}
  {\endinnercustomlemma}
\newcommand{\opt}{w_{\mathit{opt}}}
\newcommand{\dyp}{{\mathit{dp}}} 
\newcommand{\calP}{{\mathcal{P}}} 
\newcommand{\calF}{{\mathcal{F}}} 
\newcommand{\child}{{\mathit{ch}}} 
\newcommand{\cand}{{\mathit{cand}}}
\title{Proportional Representation under Single-Crossing Preferences Revisited}
\author{
Andrei Constantinescu,  
Edith Elkind\\
}
\begin{document}
\maketitle
\pagestyle{plain}

\begin{abstract}
We study the complexity of determining a winning committee under the Chamberlin--Courant voting rule 
when voters' preferences are single-crossing on a line, or, more generally, on a median graph
(this class of graphs includes, e.g., trees and grids). 
For the line, \citeauthor{sc-cc}~(\citeyear{sc-cc}) describe an $O(n^2mk)$ algorithm 
(where $n$, $m$, $k$ are the number of voters, the number of candidates and the committee size, 
respectively); we show that a simple tweak improves the time complexity to $O(nmk)$. We then improve this bound for $k=\Omega(\log n)$ by reducing our problem to the $k$-link path problem for DAGs with concave Monge weights, obtaining a
$nm2^{O\left(\sqrt{\log k\log\log n}\right)}$ algorithm for the general case and a nearly 
linear algorithm for the Borda misrepresentation function. For trees, 
we point out an issue with the algorithm proposed by \citet{cps15},
and develop a $O(nmk)$ algorithm for this case as well. For grids, 
we formulate a conjecture about the structure of optimal solutions, 
and describe a polynomial-time algorithm that finds a winning committee if this conjecture
is true; we also explain how to convert this algorithm into a bicriterial
approximation algorithm whose correctness does not depend on the conjecture.
\end{abstract}

\section{Introduction}
The problem of computing election winners under various voting rules is perhaps the most
fundamental research challenge in computational social choice \cite{comsoc-book}. While this problem
is typically easy for single-winner voting rules (with a few notable exceptions \cite{dodgson,young}), 
for many rules that are supposed to return a {\em set} of winners, the winner determination
problem is computationally demanding. In particular, this is the case for one
of the most prominent and well-studied multiwinner voting rules, namely, 
the Chamberlin--Courant rule \cite{cc83}. Under this rule, each voter is assumed to assign 
a numerical disutility to each candidate; these disutilities are then lifted to sets
of candidates, so that a voter's disutility for a set of candidates $S$ is his minimum disutility
for a member of $S$, and the goal is to identify a committee that minimizes the sum of voters'
disutilities given an upper bound on the committee size (see Section~\ref{sec:prelim} for formal definitions). 
It has been argued that this rule is well-suited for a variety of tasks, ranging from selecting a representative student assembly to deciding which movies to show on a plane
\cite{mw-chapter}. 

Decision problems
related to winner determination under the Chamberlin--Courant rule have been shown 
to be NP-hard even when the disutility function takes a very simple form \cite{prz08,lb11}. 
Accordingly, there is substantial body of work that focuses on identifying special classes
of voters' preferences for which a winning committee can be determined
efficiently. In particular, polynomial-time algorithms have been obtained
for various structured preference domains, such as single-peaked preferences \cite{sp-cc},
single-crossing preferences \cite{sc-cc}, and preferences that are single-peaked on trees, 
as long as the underlying trees have few leaves or few internal vertices \cite{yce13,pe16} 
(see also \cite{trees-full}). These results extend to preferences that are nearly single-peaked or nearly single-crossing, for a suitable choice of distance measure \cite{sp-width,sc-cc,misra}; see also the survey by \citet{structure-chapter} for a summary of results for restricted domains and the survey by \citet{mw-chapter} for a discussion of other approaches to circumventing hardness results for the Chamberlin--Courant rule.

Recently, \citet{kung} and, independently, \citet{cps15} introduced the domain 
of preferences that are singe-crossing on trees, which considerably extends the domain of single-crossing preferences, while sharing some if its desirable properties, such as existence of (weak) Condorcet winners. 
\citet{cps15} also proposed an algorithm for computing the Chamberlin--Courant winners when voters' preferences belong to this domain.
Unfortunately, a close inspection
of this algorithm shows that its running time scales with the number of subtrees
of the underlying tree, which may be exponential in the number of voters; we discuss
this issue in Section~\ref{sec:tree}.

\smallskip

\noindent{\bf Our Contribution}
In this paper, we revisit the problem of computing the winners under the Chamberlin--Courant
rule when the voters' preferences are single-crossing, or, more broadly, single-crossing on a tree.
For the former setting, we observe that a simple tweak of the algorithm of 
\citeauthor{sc-cc}~(\citeyear{sc-cc}) improves the running time from $O(n^2mk)$ to $O(nmk)$.
We then reduce the Chamberlin--Courant winner determination problem 
to the well-studied {\sc DAG $k$-Link Path} problem, and show that the instances of the latter
problem that are produced by our reduction have the {\em concave Monge property}.
We believe that the relationship between the single-crossing property 
and concavity may be of independent interest; further,  
it can be used to show that for $k=\Omega(\log n)$ our problem admits an algorithm that runs in time $nm2^{O\left(\sqrt{\log k\log\log n}\right)}$;
also, for the Borda disutility function (see Section~\ref{sec:prelim}), we
obtain an algorithm that runs in time $O(nm\log(nm))$, i.e., nearly linear in the input size.
This improvement is significant, as in some of the applications we discussed (such as movie 
selection) $k$ can be quite large.

For preferences single-crossing on trees, we design a polynomial-time algorithm that is based on dynamic programming. Interestingly, 
we can achieve a running time of $O(nmk)$ for this case as well.

Finally, we venture beyond trees, and consider preferences single-crossing on grids.
We formulate a conjecture about the structure of optimal solutions in this setting,  
and present a polynomial-time algorithm
whose correctness is guaranteed under this conjecture. We then show how
to transform it into a bicriterial approximation algorithm that is correct
irrespective of the conjecture.

\section{Preliminaries}\label{sec:prelim}
For a positive integer $n$, we write $[n]$ to denote the set $\{1, \dots, n\}$;
given two non-negative integers $n, n'$, we write $[n:n']$ to denote the set
$\{n, \dots, n'\}$.
Given a tree $T$, we write $|T|$ to denote the number of vertices of $T$.

We consider a setting with a set of voters $V$, where $|V|=n$,
and a set of candidates $C=[m]$. Voters rank candidates from best to worst,
so that the preferences of a voter $v$ are given by a linear order $\succ_v$: 
given two distinct candidates $i, j\in C$ we write $i\succ_v j$
when $v$ prefers $i$ to $j$. We write $\calP = (\succ_v)_{v\in V}$;
the list $\calP$ is referred to as a {\em preference profile}.
We assume that we are also given a {\em misrepresentation function}
$\rho: V\times C\to {\mathbb Q}$; we say that $\rho$ is {\em consistent} with $\calP$
if $c\succ_v c'$ implies $\rho(v, c)\le \rho(v, c')$ for each $v\in V$ and all $c, c'\in C$.
Intuitively, the value $\rho(v, c)$ indicates to what extent
candidate $c$ misrepresents voter $v$. 
An example of a misrepresentation function is the {\em Borda misrepresentation function}
$\rho_B$ given by $\rho_B(v, c)=|\{c'\in C: c'\succ_v c\}|$:
this function assigns value $0$ to voter's top choice, 
value $1$ to his second choice, and value $m-1$ to his last choice.

\smallskip

\noindent{\bf Multiwinner Rules\ }
A {\em multiwinner voting rule} maps a profile $\calP$ over a candidate set 
$C$ and a positive integer $k$, $k\le |C|$, to a non-empty collection 
of subsets of $C$ of size at most $k$; the elements of this collection are called 
the {\em winning committees}%
\footnote{Note that we allow committees of size less than $k$, as this simplifies the presentation; all our results hold if each committee is required to be of size exactly $k$}. In this paper, we focus on a family of multiwinner
voting rules known as {\em Chamberlin--Courant} rules \cite{cc83,mw-chapter}. 

An {\em assignment function} is a mapping $w: V\to C$; for each $V'\subseteq V$
we write $w(V')=\{w(v): v\in V'\}$. If $|w(V)|\le k$, 
then $w$ is called a {\em $k$-assignment function}. Given 
a misrepresentation function $\rho$ and
a profile $\calP=(\succ_v)_{v\in V}$, 
the {\em total dissatisfaction of voters in $V$ under a $k$-assignment $w$}
is given by $\Phi_\rho(\calP, w) = \sum_{v\in V}\rho(v, w(v))$. Intuitively, 
$w(v)$ is the representative of voter $v$ in the committee $w(V)$, 
and $\Phi_\rho(\calP, w)$ measures to what extent the voters are dissatisfied
with their representatives. An {\em optimal $k$-assignment function}
for $\rho$ and $\calP$ is a $k$-assignment function that minimizes $\Phi_\rho(\calP, w)$
among all $k$-assignment functions for $\calP$.

The {\em Chamberlin--Courant multiwinner voting rule} takes as input a preference profile $\calP=(\succ_v)_{v\in V}$
over a candidate set $C$, a misrepresentation function $\rho: V\times C\to{\mathbb Q}$ that is consistent with $\calP$ 
and a positive integer $k\le |C|$, and outputs
all sets $W$ such that $W=w(V)$ for some $k$-assignment function $w$ that is optimal 
for $\rho$ and $\calP$. In the {\sc CC-Winner} problem the goal is to find some set $W$
in the output of this rule. 
This problem is known to be NP-complete \cite{prz08}, even if $\rho$ is the Borda misrepresentation
function \cite{lb11}.
We assume that operations on values of $\rho(v, c)$
(such as, e.g., addition) can be performed in unit time; this assumption is 
realistic as the values of $\rho$ are usually small integers.

We say that a $k$-assignment $w$ for a profile $\calP$ and a misrepresentation function $\rho$
is {\em canonical} if $w$ is optimal for $\calP$ and $\rho$,
and for each voter $v\in V$ the candidate $w(v)$ is $v$'s most preferred candidate in $w(V)$. 
If $\rho(v, a)\neq \rho(v, b)$ for all $v\in V$ and all pairs of distinct candidates 
$(a, b)\in C\times C$, then every optimal assignment is canonical; 
however, if it may happen that $\rho(v, a)=\rho(v, b)$ for $a\neq b$, 
this need not be the case. An optimal $k$-assignment $w$ can be transformed into a canonical assignment
$\widehat{w}$ by setting $\widehat{w}(v)$ to be $v$'s most preferred candidate in $w(V)$;
note that this transformation weakly decreases misrepresentation and the committee size.
\smallskip

\noindent{\bf Single-Crossing Preferences\ }
A profile $\calP=(\succ_v)_{v\in V}$ over $C$ is {\em single-crossing (on a line)}
if there is a linear order $\lhd$ on $V$ such that for 
any triple of voters 
$v_1, v_2, v_3$ with $v_1\lhd v_2\lhd v_3$ and every pair of distinct
candidates $(c, c')\in C\times C$ it is not the case that 
$c\succ_{v_1} c'$, $c'\succ_{v_2} c$, and $c\succ_{v_3} c'$. 
Intuitively, if we order the voters in $V$ according to $\lhd$ and go through the list of voters $V$ from left to right, 
every pair of candidates `crosses' at most once.

A profile $\calP=(\succ_v)_{v\in V}$ over $C$ is {\em single-crossing on a tree}
if there exists a tree $T$ with vertex set $V$ that has the following property:
for any triple of voters 
$v_1, v_2, v_3$ such that $v_2$ lies on the path from $v_1$ to $v_3$ in $T$ 
and every pair of distinct candidates $(c, c')\in C\times C$ it is not the case that 
$c\succ_{v_1} c'$, $c'\succ_{v_2} c$, and $c\succ_{v_3} c'$. 
Note that if a profile $\calP$ is single-crossing on a tree $T$ that is a path, then $\calP$ is single-crossing on a line.

We say that an assignment function $w$ for a profile $\calP$ over $C$ that is single-crossing
on a tree $T$ is {\em connected} if for every candidate $c\in C$  
it holds that the inverse image $w^{-1}(c)$ defines a subtree of $T$.
The following lemma shows that, when considering profiles single-crossing on trees, 
we can focus on connected assignment functions.

\begin{lemma}\label{lem:subtree}
For every profile $\calP$ over $C$ that is single-crossing on a tree $T$ and every $k\le |C|$ every canonical $k$-assignment for $\calP$ is connected.
\end{lemma}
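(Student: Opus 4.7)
The plan is to argue by contradiction, turning a disconnected inverse image into a forbidden single-crossing pattern on $T$. So I suppose $w$ is canonical but that there is some candidate $c\in C$ such that $w^{-1}(c)$ does not induce a subtree of $T$, and aim to exhibit three voters $v_1,v_2,v_3$ with $v_2$ on the $v_1$–$v_3$ path in $T$ and a pair of candidates $(c,c')$ witnessing a violation of the tree single-crossing condition.

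First I would extract the three voters. Since $w^{-1}(c)$ is not connected in $T$, I can pick $v_1,v_3\in w^{-1}(c)$ lying in different connected components of the subgraph of $T$ induced by $w^{-1}(c)$; then the unique path from $v_1$ to $v_3$ in $T$ must leave $w^{-1}(c)$, so it contains some vertex $v_2\notin w^{-1}(c)$. Setting $c':=w(v_2)$, we have $c'\neq c$, and both $c$ and $c'$ lie in $w(V)$.

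Next I would read off the required preferences from canonicity. Because $w$ is canonical, each assigned candidate is the voter's most preferred element of $w(V)$, and the preferences $\succ_v$ are strict linear orders. Applied to $v_1$, whose best candidate in $w(V)$ is $c$, this gives $c\succ_{v_1} c'$; applied to $v_3$ it gives $c\succ_{v_3} c'$; and applied to $v_2$, whose best candidate in $w(V)$ is $c'$, it gives $c'\succ_{v_2} c$. Thus the triple $(v_1,v_2,v_3)$ with $v_2$ on the path from $v_1$ to $v_3$ in $T$ satisfies $c\succ_{v_1}c'$, $c'\succ_{v_2}c$, $c\succ_{v_3}c'$, which is exactly the pattern forbidden by the definition of $\calP$ being single-crossing on $T$. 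This contradiction shows that every $w^{-1}(c)$ must induce a subtree, i.e., $w$ is connected.

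I do not foresee a substantive obstacle: the argument is essentially an unpacking of the two definitions (canonical assignment and single-crossing on a tree). The only place one has to be slightly careful is confirming that the "most preferred in $w(V)$" condition indeed delivers \emph{strict} preference between $c$ and $c'$ (which follows because $\succ_v$ is a linear order and $c\neq c'$), and that picking $v_1,v_3$ in different components of $w^{-1}(c)$ in $T$ guarantees the existence of an intermediate $v_2\notin w^{-1}(c)$ on the $T$-path between them (which follows from $T$ being a tree, so the $v_1$–$v_3$ path is unique).
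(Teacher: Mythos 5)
Your proof is correct and follows essentially the same approach as the paper's: both extract two voters assigned $c$ whose connecting path in $T$ passes through a voter assigned some $c'\neq c$, and then use canonicity to produce the forbidden single-crossing pattern. The only cosmetic difference is that the paper locates the intermediate voter as a cut vertex of the minimal subtree spanning $w^{-1}(c)$, whereas you take any vertex outside $w^{-1}(c)$ on the path between two components; the resulting contradiction is identical.
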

\begin{proof}
Let $w$ be a canonical $k$-assignment for $\calP$.
To see that $w$ is connected, fix a candidate $c\in C$ and
let $T'$ be the smallest subtree of $T$ that contains the set 
$w^{-1}(c)$. If $w$ is not connected, then there 
is a voter $v$ in $T'$ such that $w(v) = c'$, $c'\neq c$, and
deleting $v$ would disconnect $T'$. Then 
there are two voters $x, y \in T' \cap w^{-1}(c)$ for which the unique simple $x$--$y$ 
path contains $v$. Since $w$ is a canonical assignment,  
we have $c \succ_x c', c \succ_y c'$, but $c' \succ_{v} c$, 
a contradiction with $\calP$ being single-crossing on~$T$.
\end{proof}

The next lemma establishes a monotonicity property of canonical assignments
that will be useful for our analysis.

\begin{lemma} \label{lem:increasing}
Consider a profile $\calP=(\succ_v)_{v\in V}$ that is single-crossing on a tree $T$, 
and suppose that voter $v$ ranks the candidates as $1\succ_v\ldots\succ_v m$.
Then, every canonical $k$-assignment for $\calP$
is non-decreasing along every simple path in $T$ starting at $v$.
\end{lemma}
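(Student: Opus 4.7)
The plan is to argue by contradiction by applying the single-crossing condition on $T$ to a triple consisting of $v$ together with two out-of-order voters on a path starting at $v$. Suppose towards a contradiction that there is a simple path $v = u_0, u_1, \ldots, u_\ell$ in $T$ and indices $0 \le i < j \le \ell$ with $w(u_i) > w(u_j)$; set $a = w(u_i)$, $b = w(u_j)$, so $b < a$.

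First I would extract three preference relations. Since $v$'s ranking is $1 \succ_v \cdots \succ_v m$ and $b < a$, we have $b \succ_v a$. Because $w$ is canonical, $w(u_i) = a$ is $u_i$'s most preferred candidate in $w(V)$, and analogously for $u_j$; as $a, b \in w(V)$, this yields $a \succ_{u_i} b$ and $b \succ_{u_j} a$.

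Next I would split into two cases depending on whether the earlier voter on the path coincides with $v$. If $i \ge 1$, then $v, u_i, u_j$ are three distinct voters with $u_i$ lying on the unique simple path from $v$ to $u_j$ in $T$, and the pattern $b \succ_v a$, $a \succ_{u_i} b$, $b \succ_{u_j} a$ directly contradicts the assumption that $\calP$ is single-crossing on $T$. If $i = 0$, then $u_i = v$, so $a = w(v)$ while $b \in w(V)$ satisfies $b \succ_v a$, contradicting the canonicity of $w$ at $v$ (its assigned candidate should be its favorite within $w(V)$).

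The main obstacle, which is really only a mild subtlety, is handling the degenerate case $i=0$ separately, since the single-crossing definition requires three distinct voters; this is dispatched by a direct one-voter canonicity argument. Everything else is just reading off the forbidden pattern from the definitions.
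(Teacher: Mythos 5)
Your proof is correct and follows essentially the same argument as the paper: derive the forbidden pattern $b \succ_v a$, $a \succ_{u_i} b$, $b \succ_{u_j} a$ from canonicity and $v$'s identity ranking, and contradict single-crossing on $T$. The separate treatment of the degenerate case $i=0$ is a harmless extra precaution that the paper elides (there the pattern already forces $a\succ_v b$ and $b\succ_v a$ simultaneously, which is impossible for a linear order).
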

\begin{proof}
Consider a canonical $k$-assignment $w$. Let $P$ be a simple path starting at voter~$v$ and let $x$, $y$ be two voters on $P$ such that $x$ precedes $y$ on $P$. Suppose $w(x) = a$, $w(y) = b$ with $a > b$. Then
$b \succ_v a$, $a \succ_x b$ and $b \succ_y a$, 
a contradiction with $\calP$ being single-crossing on $T$.
\end{proof}

The concept of single-crossing preferences can be extended beyond lines and trees
to a class of graphs known as {\em median graphs}, defined below \cite{median}.

\begin{definition} 
An undirected graph $G$ is called a \emph{median graph} if for every triple of vertices 
$a, b, c$ there exists a unique vertex $m(a, b, c)$, called the \emph{median} 
of $a, b, c$, which 
is simultaneously on one $a$--$b$, one $b$--$c$ and one $c$--$a$ shortest path.
Given a median graph $G$ with vertex set $V$, we say that a preference profile 
$\calP=(\succ_v)_{v\in V}$ is \emph{single-crossing with respect to $G$} if for every pair of 
voters $s, t$ and for every shortest $s$--$t$ path $X$ in $G$, the restriction of $\calP$ to 
the voters that appear on $X$ is single-crossing on the line with respect to the natural 
order induced by $X$.
\end{definition}
It is not hard to check that paths and trees are median graphs. Another useful class of 
median graphs are {\em grid graphs}: a {\em two-dimensional grid graph} is a graph with 
vertex set $[n_1]\times [n_2]$ for some positive integers $n_1, n_2$ such that there is an 
edge between two vertices $(i, j)$ and $(i', j')$ if and only if $|i-i'|+|j-j'|=1$. (This 
definition and our results for grids extend beyond two dimensions, 
but for readability we focus on the 
two-dimensional case).

We will be interested in solving {\sc CC-Winner} if voters' preferences
are single-crossing on a line, on a tree or on a grid; we denote these special
cases of our problem by {\sc CC-Winner-SC}, {\sc CC-Winner-SCT}, and {\sc CC-Winner-SCG},
respectively. We assume that the respective ordering/tree/grid is given as part of the input;
this assumption is without loss of generality as the respective graph
can be computed from the input profile in polynomial time \cite{df,kung,cps15,median}. 
\smallskip

\noindent{\bf Rooted Trees and DAGs\ }
A {\em rooted tree} is a finite tree with a designated root vertex $r$. 
We say that a vertex $u$ is a {\em parent} of $v$ (and $v$ is a {\em child} of $u$)
if $u$ and $v$ are adjacent and $u$ lies on the path from $v$ to $r$. A vertex
with no children is called a {\em leaf}. We denote the number of children of vertex 
$v$ by $n_v$, and represent the set of children of $v$ as an array
$\child[v, 1], \dots, \child[v, n_v]$.
We write $T_v$ to denote the subtree of $T$ with vertex set 
$\{u: \text{ the path from $u$ to $r$ goes through $v$}\}$.
Similarly, for each $v\in V$ and $i\in [1:n_v+1]$, 
let $T_{v, i}$ be the subtree obtained by starting with $T_v$ and removing
the subtrees $T_{\child[v, 1]}, \dots, T_{\child[v, i-1]}$.
Observe that $T_{v,1} = T_v$ and that $T_{v,n_v+1}$ 
is just the singleton vertex $v$.

A \emph{directed acyclic graph} (DAG) is a finite oriented graph whose vertices can be totally 
ordered so that the tail of each arc precedes its head 
in the ordering.  
All DAGs we will consider have the set $[0:n]$ as their set of vertices, and are 
ordered with respect to the natural ordering $<$. A DAG is said to be {\em weighted} 
if its arcs are given real values by a function $\omega$.
A weighted DAG on vertex set $[0:n]$ satisfies the \emph{concave Monge} property if 
for all vertices $i$, $j$ such that $0 < i + 1 < j < n$ it holds that 
$\omega(i, j) + \omega(i + 1, j + 1) \leq \omega(i, j + 1) + \omega(i + 1, j)$.
We refer to the weight function $\omega$ itself as being {\em concave Monge} 
in this case.

\section{Improved Algorithms for Single-Crossing Preferences}\label{sec:line}
We start by considering the setting where the voters' preferences are single-crossing
on a line. We assume without loss of generality that the voter order $\lhd$
is given by $v_1\lhd\ldots\lhd v_n$ and that the first voter ranks the candidates 
in $C=[m]$ as $1\succ_{v_1}\ldots\succ_{v_1} m$. 

The following lemma is implicit in the work of 
\citeauthor{sc-cc}~(\citeyear{sc-cc}), 
and can be seen as an instantiation of Lemmas~\ref{lem:subtree} and~\ref{lem:increasing}.

\begin{lemma}\label{lem:path_mon} 
For every canonical assignment $\opt$ for {\sc CC-Winner-SC}
and every pair of voters $v_i, v_j$ with $i<j$ 
it holds that $\opt(v_i) \leq \opt(v_j)$.
\end{lemma}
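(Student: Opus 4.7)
The plan is to derive this as a direct specialization of Lemma~\ref{lem:increasing} to the case when the underlying tree is a path. The key observation is that a profile single-crossing on a line with respect to the order $v_1 \lhd \ldots \lhd v_n$ is, by definition, single-crossing on the tree $T$ whose vertex set is $V$ and whose edge set is $\{\{v_i, v_{i+1}\} : 1 \leq i < n\}$, i.e., the path graph consistent with $\lhd$. Indeed, on this tree, $v_2$ lies on the $v_1$--$v_3$ path iff $v_1 \lhd v_2 \lhd v_3$ (up to reversal), matching the single-crossing condition on a line.

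Next, I would invoke the hypothesis of Lemma~\ref{lem:increasing}: since by assumption $v_1$ ranks the candidates as $1 \succ_{v_1} \ldots \succ_{v_1} m$, the lemma applies with $v := v_1$ and guarantees that any canonical $k$-assignment is non-decreasing along every simple path in $T$ starting at $v_1$. In the path graph $T$ there is essentially only one such maximal simple path, namely $v_1, v_2, \ldots, v_n$, and every pair $v_i, v_j$ with $i < j$ lies on it in that order.

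Therefore, applying the monotonicity conclusion to this single path, we obtain $\opt(v_i) \leq \opt(v_j)$ for any $i < j$, which is exactly the claim. I do not foresee any obstacle: the argument is a straight specialization, and in particular does not require invoking Lemma~\ref{lem:subtree} (connectedness is automatic on a path once monotonicity holds, but is not needed for the stated inequality). The only thing to be careful about in the write-up is to state explicitly that the path $v_1, \ldots, v_n$ qualifies as a simple path in $T$ starting at $v_1$, so that Lemma~\ref{lem:increasing} applies to any pair $(v_i, v_j)$ appearing on it.
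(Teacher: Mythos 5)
Your proof is correct and matches the paper's intent: the paper gives no separate proof, stating only that the lemma is an instantiation of Lemmas~\ref{lem:subtree} and~\ref{lem:increasing}, and your specialization of Lemma~\ref{lem:increasing} to the path graph with $v := v_1$ is exactly that instantiation. Your observation that Lemma~\ref{lem:subtree} is not actually needed for the stated inequality is accurate.
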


\citeauthor{sc-cc}~(\citeyear{sc-cc}) use this lemma to develop a dynamic programming algorithm
for {\sc CC-Winner-SC} that runs in time $O(n^2mk)$.
We will now present a faster dynamic programming algorithm that uses auxiliary variables.

\begin{theorem}\label{thm:line-nmk}
Given an instance of {\sc CC-Winner-SC} with $n$ voters, $m$ candidates and committee size $k$, 
we can compute an optimal solution in time $O(nmk)$. 
\end{theorem}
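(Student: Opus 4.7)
The plan is to strengthen the dynamic program of \citeauthor{sc-cc}~(\citeyear{sc-cc}) by maintaining auxiliary prefix minima, so that each transition runs in constant time rather than $O(m)$ or $O(n)$. The key structural input is Lemma~\ref{lem:path_mon}, which tells us that every canonical $k$-assignment is non-decreasing along $v_1 \lhd \dots \lhd v_n$; thus we need only search over non-decreasing assignments, and any optimal such assignment can be canonicalized without increasing the objective (as noted in the preliminaries).

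For $i \in [n]$, $c \in [m]$, $t \in [k]$, let $f[i][c][t]$ be the minimum of $\sum_{j=1}^{i}\rho(v_j, w(v_j))$ over non-decreasing assignments $w : \{v_1, \dots, v_i\} \to C$ with $w(v_i) = c$ and $|w(\{v_1, \dots, v_i\})| \leq t$. The optimum is then $\min_{c \in [m]} f[n][c][k]$. Monotonicity implies that any occurrence of $c$ in $(w(v_1), \dots, w(v_i))$ forms a contiguous suffix, so the previous voter is assigned either $c$ itself (leaving the candidate count unchanged) or some strictly smaller candidate (in which case the slot for $c$ is freshly used). This yields the recurrence
\[f[i][c][t] = \rho(v_i, c) + \min\!\left(f[i-1][c][t],\ \min_{c' < c} f[i-1][c'][t-1]\right),\]
with base case $f[1][c][t] = \rho(v_1, c)$ for every $t \geq 1$.

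A direct evaluation of the inner minimum would cost $O(m)$ per state, for a total of $O(nm^2 k)$. To shave a factor of $m$, I would maintain prefix minima $M[i][c][t] = \min_{c' \leq c} f[i][c'][t]$; these can be built in $O(mk)$ time per layer $i$ from the already-computed $f[i][\cdot][\cdot]$ table, and they reduce each transition to the $O(1)$ form $f[i][c][t] = \rho(v_i, c) + \min\!\left(f[i-1][c][t],\ M[i-1][c-1][t-1]\right)$. Summing over the $n$ layers yields the claimed $O(nmk)$ bound. Standard backpointer bookkeeping recovers an optimal assignment, and hence a winning committee, within the same budget.

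The one place where care is needed is justifying that restricting to non-decreasing assignments loses no optimum; this is exactly the content of Lemma~\ref{lem:path_mon} combined with canonicalization. Beyond that, the argument is a routine tabulation, and the main algorithmic insight is simply that the prefix minima $M[i][c][t]$ can be propagated in lock-step with $f$, eliminating the $O(m)$ scan per cell that made the original algorithm slower.
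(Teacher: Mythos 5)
Your proposal is correct and takes essentially the same approach as the paper: a dynamic program indexed by (voter, candidate, committee-size budget) whose $O(m)$ inner minimization over candidates is eliminated by maintaining running minima, giving $O(1)$ transitions and an $O(nmk)$ total. Your prefix-minimum table $M[i][c][t]$ is exactly the paper's auxiliary table $\dyp_0[i,\ell,c]$ (which is a suffix minimum over candidates in $[c:m]$, with voters processed from $v_n$ backwards rather than from $v_1$ forwards); the difference is purely one of orientation.
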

\begin{proof}
We will explain how to compute the minimum dissatisfaction; a winning committee
can then be computed using standard dynamic programming techniques.

We define the following subproblems for each $i\in [n]$, $c\in [m]$ and each 
$\ell=1, \dots, \min\{k, m-c+1, n-i+1\}$:
\begin{itemize}
\item 	
let $\dyp_0[i, \ell, c]$ be the minimum dissatisfaction of voters in $\{v_i, \dots, v_n\}$ 
                          for a size-$\ell$ committee that is contained in $[c : m]$; 
\item 	
let $\dyp_1[i, \ell, c]$ be the minimum dissatisfaction of voters in $\{v_i, \dots, v_n\}$ 
                          for a size-$\ell$ committee that is contained in $[c : m]$ and represents $v_i$ by $c$. 
\end{itemize}

To simplify presentation, we assume $\dyp_f[-, -, -] = \infty$ for $f \in\{0, 1\}$ 
and $i, c, \ell$ not satisfying the conditions.

We have  $\dyp_1[n, 1, c] = \rho(v_n, c)$ for each $c\in C$.
Also, $\dyp_0[n, 1, m] = \rho(v_n, m)$, and for $c<m$ we have
         $\dyp_0[n, 1, c] = \min\{\dyp_1[n, 1, c], \dyp_0[n, 1, c + 1]\}$.
         
\noindent For $i=n-1, \dots, 1$ we have the following recurrence:
\begin{align*}
    \dyp_1[i, \ell, c] &=
          \rho(v_i, c) \\&+
         \min\{ \dyp_1[i + 1, \ell, c],
         \dyp_0[i + 1, \ell - 1, c + 1]);\\
    \dyp_0[i, \ell, c] &= 
    \min\{\dyp_1[i, \ell, c], \dyp_0[i, \ell, c + 1]\}.
\end{align*}
This recurrence enables us to compute all values $\dyp_f[-, -, -]$ for $f\in\{0, 1\}$; 
the minimum dissatisfaction in our instance is given by $\min_{1\le \ell\le k}\dyp_0[1, \ell, 1]$.
The dynamic program has $O(nmk)$ entries; each entry can be computed in time
$O(1)$ given the already-computed entries.
\end{proof}
To improve over the $O(nmk)$ bound, we will reduce {\sc CC-Winner-SC}
to the well-studied {\sc DAG $k$-link Path} problem with Monge concave weights 
(see, e.g., \citet{bein_larmore_park}), and use 
the powerful machinery developed for it to obtain faster algorithms for 
our setting.

\begin{definition}\label{def:k-path} 
Given a DAG with an arc weight function $\omega$ and two designated vertices $s$ and $t$, the 
{\sc $k$-Link Path} problem ($k$-LPP) asks for a minimum total weight path starting at $s$, ending at $t$ 
and consisting of exactly $k$ arcs.
\end{definition}
There are a number of algorithmic results for the {\sc $k$-Link Path} problem assuming a 
concave Monge weight function
\cite{bein_larmore_park,worse_baruch,baruch_klink_monge}. We will first present our reduction, and 
then discuss the implications for {\sc CC-Winner-SC}.

Given an instance of {\sc CC-Winner-SC} with a preference profile $\calP=(\succ_v)_{v\in V}$ over 
$C=[m]$, we construct a DAG $G$ with vertex set $[0:n]$ and the weight function $\omega$ given by
\begin{equation} \label{eq:omega}
    \omega(i, j) = \min_{c \in C}{\left(\rho(v_{i + 1}, c) + \ldots + \rho(v_j, c)\right)}.
\end{equation}
That is, $\omega(i, j)$ represents the minimum total dissatisfaction that voters in $\{v_{i+1}, \dots, v_j\}$ 
derive from being represented by a single candidate $c$. Let 
$\cand(i, j)$ be some candidate in $\arg\min_{c\in C}\left(\rho(v_{i + 1}, c) + \ldots + \rho(v_j, c)\right)$. 

First, we observe
that an optimal solution to {\sc CC-Win\-ner-SC} corresponds to a minimum cost path in $k$-LPP.

\begin{theorem}\label{thm:link}
The minimum cost of a $k$-link $0$--$n$ path in $G$ with respect to $\omega$ is equal to the minimum total dissatisfaction for $\calP$ and $k$.
\end{theorem}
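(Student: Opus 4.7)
My plan is to prove the equality by establishing both inequalities between the minimum $k$-link $0$--$n$ path cost in $G$ and the minimum total dissatisfaction $\opt$. The intuition is that $k$-link $0$--$n$ paths correspond bijectively to partitions of the voter sequence $v_1\lhd\ldots\lhd v_n$ into $k$ contiguous segments, where each segment's contribution $\omega(i_{j-1},i_j)$ is the best possible dissatisfaction when all its voters are represented by a single candidate.

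The easier direction (path cost $\ge \opt$) goes as follows. Given an optimal $k$-link path $0=i_0<i_1<\ldots<i_k=n$, I set $c_j=\cand(i_{j-1},i_j)$ and define $w(v_t)=c_j$ whenever $i_{j-1}<t\le i_j$. Then $w(V)\subseteq\{c_1,\ldots,c_k\}$, so $w$ is a valid $k$-assignment, and by construction its total dissatisfaction is exactly $\sum_{j=1}^k \omega(i_{j-1},i_j)$, which matches the path cost. Hence $\opt$ is no larger.

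The other direction ($\opt \ge$ path cost) is more delicate. I start with an optimal $k$-assignment; replacing it by its canonical version (see the paragraph before Lemma~\ref{lem:subtree}) weakly decreases both dissatisfaction and committee size, so I may assume it is canonical. Applying Lemma~\ref{lem:subtree} with $T$ being the line, each inverse image $w^{-1}(c)$ is a contiguous block along $\lhd$, so the voters are partitioned into some $k'\le k$ maximal contiguous intervals $I_j=(\mu_{j-1},\mu_j]$, with $0=\mu_0<\mu_1<\ldots<\mu_{k'}=n$, each represented by a candidate $c_j$. Since $\omega(\mu_{j-1},\mu_j)\le\sum_{v\in I_j}\rho(v,c_j)$, the $k'$-link path $0\to\mu_1\to\ldots\to\mu_{k'}$ has cost at most $\Phi_\rho(\calP,w)=\opt$. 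When $k'<k$, I need to pad this path up to exactly $k$ links without increasing its cost.

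The padding step is the main obstacle, and it relies on the subadditivity $\omega(i,j)\ge\omega(i,j')+\omega(j',j)$ for $i<j'<j$: splitting an arc allows two different candidates, whereas the combined arc is forced to use one, so the split never increases cost. I can therefore repeatedly split any arc $(\mu_{j-1},\mu_j)$ with $\mu_j-\mu_{j-1}\ge 2$, which exists as long as the current path has fewer than $n$ arcs. Assuming $k\le n$ (otherwise no $k$-link $0$--$n$ path exists and the statement is vacuous on one side; one can also handle this boundary case by a standard padding/reduction to $k=n$), after $k-k'$ such splits I obtain a $k$-link path of cost at most $\opt$. Combining both inequalities gives the theorem.
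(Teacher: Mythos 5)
Your proof follows essentially the same route as the paper's: the forward direction reads an assignment off the arcs via $\cand(i,j)$, and the reverse direction converts a canonical optimal assignment into a path using the contiguity/monotonicity of Lemma~\ref{lem:path_mon}. Your explicit padding step via the subadditivity $\omega(i,j)\ge\omega(i,j')+\omega(j',j)$ is in fact more careful than the paper's write-up, which silently treats the resulting path as having exactly $k$ links even though the canonical assignment may induce fewer than $k$ maximal blocks.
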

\begin{proof}
Let $P = 0 \rightarrow \ell_1 \rightarrow \ldots \rightarrow \ell_{k - 1} \rightarrow n$ 
be a minimum cost $k$-link path in $G$. 
Then $P$ induces an assignment of candidates to voters: 
if $P$ contains an arc $(i, j)$ we assign candidate $\cand(i, j)$
to voters $v_{i+1}, \dots, v_j$.
The total dissatisfaction under this assignment equals to the weight of~$P$. 

Conversely, let $\opt$ be a canonical $k$-assignment. By Lemma~\ref{lem:path_mon} we know that $\opt$ partitions the voters into contiguous subsequences. To build a path $P$ in $G$, we proceed as follows.
For every maximal contiguous subsequence of voters $v_{i+1}, \dots, v_j$ represented by 
the same candidate in $\opt$, we add the arc $i \rightarrow j$ to $P$. 
By construction, the resulting set of arcs forms a $k$-link path from $0$ to $n$, 
and its total weight is at most $\Phi_\rho(\calP, \opt)$.
\end{proof}

Note, however, that Theorem~\ref{thm:link} is insufficient for our purposes:
the efficient algorithms for $k$-LPP require the weight function $\omega$
to have the concave Monge property, so we need to prove that the reduction in
Theorem~\ref{thm:link} always produces such instances of $k$-LPP.

We say that an instance of {\sc CC-Winner-SC} is {\em concave Monge} 
if the reduction in Theorem~\ref{thm:link} maps it to an instance of $k$-LPP with 
the concave Monge property. Thus, we need to prove that
each instance of {\sc CC-Winner-SC} is concave Monge. 
To this end, we will first argue that if there is an instance of {\sc 
CC-Winner-SC} that is not concave Monge, 
then there exists such an instance with three voters.
Then we prove that every three-voter instance is concave Monge.

\begin{lemma}\label{lem:3voters}
If there exists a non-concave Monge instance of {\sc CC-Winner-SC}, 
then there exists a non-concave Monge instance of {\sc CC-Winner-SC} 
with three voters.
\end{lemma}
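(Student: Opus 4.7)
The plan is to show that a violation of the concave Monge inequality in an $n$-voter instance can be projected onto a three-voter instance by collapsing the middle voters into a single aggregate voter. Suppose the inequality $\omega(i, j) + \omega(i+1, j+1) > \omega(i, j+1) + \omega(i+1, j)$ fails at some $(i, j)$. Only voters $v_{i+1}, \ldots, v_{j+1}$ enter these four terms, and the middle voters $v_{i+2}, \ldots, v_j$ contribute to every term through the aggregate dissatisfaction function $m(c) = \sum_{k=i+2}^{j} \rho(v_k, c)$. I would therefore define a three-voter instance with dissatisfaction functions $a(c) = \rho(v_{i+1}, c)$, $m(c)$, and $b(c) = \rho(v_{j+1}, c)$, with the outer voters inheriting $\succ_{v_{i+1}}$ and $\succ_{v_{j+1}}$.

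The delicate step is choosing a preference order $\succ_2$ for the middle voter that is both consistent with $m$ and yields a single-crossing three-voter profile. The key observation is that whenever $c \succ_{v_{i+1}} c'$ and $c \succ_{v_{j+1}} c'$, single-crossing of the original profile forces every middle voter $v_k$ to rank $c$ above $c'$---any dissenter would give the forbidden pattern with $v_{i+1} \lhd v_k \lhd v_{j+1}$---and so consistency and summation yield $m(c) \leq m(c')$. Given this, I would build $\succ_2$ by partitioning $C$ into level sets of $m$, ordering the classes by increasing $m$-value, and within each class taking any linear extension of the partial order defined by joint dominance in $\succ_{v_{i+1}}$ and $\succ_{v_{j+1}}$.

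Single-crossing of the three-voter profile then reduces to pairs $(c, c')$ agreed upon by both endpoint voters, and $\succ_2$ agrees on those by construction. A direct calculation confirms that the four weights at positions $(0, 2), (1, 3), (0, 3), (1, 2)$ of the constructed three-voter instance coincide with the four original weights, so the violation is inherited. The main obstacle is verifying that $\succ_2$ is well-defined---specifically, that the joint-dominance partial order and the $m$-consistency constraints are compatible---which is exactly what the key observation above ensures.
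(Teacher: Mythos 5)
Your proposal is correct and matches the paper's proof essentially step for step: both collapse the middle voters into a single aggregate voter with summed dissatisfaction, use the single-crossing property of the original profile to show the aggregate respects every pair on which the two endpoint voters agree, and observe that the four relevant weights are preserved. The only (immaterial) difference is the tie-breaking rule for the middle voter's order within level sets of $m$ --- the paper breaks ties by the first voter's preferences, while you take a linear extension of the joint-dominance partial order; both yield a valid single-crossing linear order.
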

\begin{proof}
Consider a non-concave Monge instance of {\sc CC-Winner-SC} with $n\neq 3$ voters.
Note that $n \geq 4$: indeed, for $n<3$ there is no pair of vertices $i, j$ 
that satisfies $0<i+1<j<n$.
We can assume that the $(i, j)$ pair that violates the concave Monge property is $(0, n - 1)$: 
otherwise we could just remove all voters before $v_{i+1}$ and all voters after $v_j$. 
Thus, we have $\omega(0, n-1) + \omega(1, n) > \omega(0, n) + \omega(1, n-1)$.
Recall that
\begin{align}
\omega(0, n-1) &= \min_{c \in C}{(\rho(v_1, c) +\ldots +  \rho(v_{n-1}, c))};\label{eq:o1}\\
\omega(1, n)   &= \min_{c \in C}{(\rho(v_2, c) + \ldots + \rho(v_n, c))};\label{eq:o2}\\
\omega(0, n)   &= \min_{c \in C}{(\rho(v_1, c) + \ldots +\rho(v_n, c))};\label{eq:o3}\\
\omega(1, n-1)  &= \min_{c \in C}{(\rho(v_2, c) +\ldots + \rho(v_{n-1}, c))}.\label{eq:o4}
\end{align}

Now, consider a three-voter profile $(\succ_x, \succ_y, \succ_z)$ with misrepresentation
function $\rho'$ constructed as follows. Set $\succ_x\,=\,\succ_{v_1}$, $\succ_z\,=\,\succ_{v_n}$
and $\rho'(x, c)=\rho(v_1, c)$, $\rho'(z, c) = \rho(v_n, c)$ for all $c\in C$.
Further, set $\rho'(y, c)=\rho(v_2, c) + \rho(v_3, c) + \ldots + \rho(v_{n - 1}, c)$. 
and let $a\succ_y b$ if and only if 
$\rho'(y, a) < \rho'(y, b)$ or
$\rho'(y, a) = \rho'(y, b)$ and $a\succ_{v_1} b$.
One can verify that $\succ_y$ is a linear order.
Moreover, we claim that the profile $(\succ_x, \succ_y, \succ_z)$ 
is single-crossing with respect to the voter order $x\lhd y\lhd z$.
Indeed, consider two distinct candidates $a, b$. If $x$ and $z$ disagree
on $(a, b)$, then in 
$(\succ_x, \succ_y, \succ_z)$ 
candidates $a$ and $b$ cross
at most once, irrespective of $y$'s preferences. Now suppose that 
$x$ and $z$ agree on $(a, b)$; for concreteness, suppose that $a\succ_x b$, 
$a\succ_z b$. As the input profile is single-crossing, all other voters 
also prefer $a$ to $b$ and hence 
$\rho(v_2, a) + \rho(v_3, a) + \ldots + \rho(v_{n - 1}, a) \le 
\rho(v_2, b) + \rho(v_3, b) + \ldots + \rho(v_{n - 1}, b)$, 
in which case $a\succ_y b$. Hence, $(\succ_x, \succ_y, \succ_z)$ 
is indeed single-crossing.

Now, we can rewrite equations~\eqref{eq:o1}--\eqref{eq:o4} as
\begin{align*}
\omega(0, n-1) &= \min_{c\in C}\left({\rho'(x, c)+\rho'(y, c)}\right);\\
\omega(1, n)   &= \min_{c\in C}\left({\rho'(y, c)  +\rho'(z, c)}\right);\\
\omega(0, n)   &= \min_{c\in C}\left(\rho'(x, c)+\rho'(y, c)+\rho'(z, c)\right);\\
\omega(1, n-1) &= \min_{c\in C}{\rho'(y, c)}.
\end{align*}
This shows that the instance of {\sc CC-Winner-SC} formed by $x$, $y$ and $z$ together with $\rho'$ is also non-concave Monge.
\end{proof}

\begin{proposition}
Every instance of {\sc CC-Winner-SC} is concave Monge.
\end{proposition}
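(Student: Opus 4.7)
The plan is to invoke Lemma~\ref{lem:3voters} to reduce to the three-voter case, so that, writing the voters in single-crossing order as $v_1 \lhd v_2 \lhd v_3$, the only inequality left to verify is
\[
\omega(0,2) + \omega(1,3) \;\leq\; \omega(0,3) + \omega(1,2).
\]
I would fix witnesses for the right-hand side: let $a\in C$ achieve the minimum defining $\omega(0,3)$ and let $b\in C$ achieve the minimum defining $\omega(1,2)$, so that the right-hand side expands to $\rho(v_1,a) + \rho(v_2,a) + \rho(v_3,a) + \rho(v_2,b)$. The overall strategy is then to produce two different feasible choices in \eqref{eq:omega} that upper-bound the left-hand side, and show that, thanks to single-crossing, at least one of them beats the right-hand side.

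First I would dispose of the degenerate case $\rho(v_2,a) = \rho(v_2,b)$: then $a$ itself attains $\omega(1,2)$, and choosing $c=a$ in \eqref{eq:omega} for both $\omega(0,2)$ and $\omega(1,3)$ gives
\[
\omega(0,2)+\omega(1,3) \;\le\; [\rho(v_1,a)+\rho(v_2,a)] + [\rho(v_2,a)+\rho(v_3,a)] \;=\; \omega(0,3)+\omega(1,2).
\]
Otherwise $\rho(v_2,b) < \rho(v_2,a)$, which by consistency of $\rho$ with $\calP$ forces the strict preference $b \succ_{v_2} a$.

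Next I would set up two complementary bounds on the left-hand side. Pairing witness $a$ for $\omega(0,2)$ with witness $b$ for $\omega(1,3)$ yields
\[
\omega(0,2) + \omega(1,3) \;\leq\; \rho(v_1,a) + \rho(v_2,a) + \rho(v_2,b) + \rho(v_3,b),
\]
which suffices whenever $\rho(v_3,b) \leq \rho(v_3,a)$; symmetrically, pairing $b$ for $\omega(0,2)$ with $a$ for $\omega(1,3)$ suffices whenever $\rho(v_1,b) \leq \rho(v_1,a)$.

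The only remaining configuration is that both sufficient conditions fail strictly: $\rho(v_1,a) < \rho(v_1,b)$ and $\rho(v_3,a) < \rho(v_3,b)$, i.e., $a \succ_{v_1} b$ and $a \succ_{v_3} b$. Combined with the strict $b \succ_{v_2} a$ obtained above, this is precisely the pattern forbidden by single-crossing along $v_1 \lhd v_2 \lhd v_3$, a contradiction. I expect the only subtle point to be the tie-breaking case $\rho(v_2,a)=\rho(v_2,b)$: it must be peeled off first, because the single-crossing axiom needs all three preferences to be strict before it can be applied, and otherwise the witness pair might fail to produce a legitimate ``crossing'' triple.
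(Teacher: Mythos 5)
Your proof is correct and follows essentially the same route as the paper's: reduce to three voters via Lemma~\ref{lem:3voters}, use the minimizers of $\omega(0,3)$ and $\omega(1,2)$ as feasible (cross-paired) witnesses for $\omega(0,2)$ and $\omega(1,3)$, and invoke the single-crossing condition on the pair $(a,b)$ to rule out the one bad configuration. The only difference is organizational — the paper fixes $a$ as $v_2$'s top candidate and uses a WLOG to pick which pairing works, whereas you try both pairings and derive the forbidden crossing pattern by contradiction — and your handling of the tie case $\rho(v_2,a)=\rho(v_2,b)$ plays the same role as the paper's case $b=a$.
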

\begin{proof}
By Lemma~\ref{lem:3voters}, it suffices to consider instances with three voters.
Thus, consider a three-voter profile that is single-crossing with respect to the voter
order $v_1\lhd v_2 \lhd v_3$. We need to argue that 
$$
\omega(0, 2)+\omega(1, 3)\le \omega(0, 3)+\omega(1, 2).
$$
Let $a=\cand(1, 2)$; we can assume that $a$ is the top candidate of the second voter. Also, let $b=\cand(0, 3)$, $c_1=\cand(0, 2)$, $c_2=\cand(1, 3)$.

Suppose first that $b=a$. Then 
$$
\omega(0, 3)+\omega(1, 2)=\rho(v_1, a)+2\rho(v_2, a)+\rho(v_3, a).
$$ 
On the other hand, 
$c_1=\cand(0, 2)$ implies that 
$$\omega(0, 2) = \rho(v_1, c_1)+\rho(v_2, c_1)\le \rho(v_1, a)+\rho(v_2, a),
$$ and 
$c_2=\cand(1, 3)$ implies that
$$
\omega(1, 3) = \rho(v_2, c_2)+\rho(v_3, c_2)\le \rho(v_2, a)+\rho(v_3, a).
$$
Adding up these inequalities, we obtain the desired result.

Now, suppose that $b\neq a$. 
Then 
$$\omega(0, 3)+\omega(1, 2)=\rho(v_1, b)+\rho(v_2, b)+\rho(v_2, a)+\rho(v_3, b).
$$
As the second voter ranks $a$ first, she ranks $b$ below $a$. Hence, by the single-crossing property, at least one other voter prefers $a$ to $b$; we can assume without loss of generality that $a\succ_{v_3} b$.
Thus, $\rho(v_3, b)\ge \rho(v_3, a)$ and hence 
$$
\omega(0, 3)+\omega(1, 2)\ge \rho(v_1, b)+\rho(v_2, b)+\rho(v_2, a)+\rho(v_3, a).
$$
Now, $c_1=\cand(0, 2)$ implies that 
$$
\omega(0, 2) = \rho(v_1, c_1)+\rho(v_2, c_1)\le \rho(v_1, b)+\rho(v_2, b),
$$ 
and 
$c_2=\cand(1, 3)$ implies that
$$
\omega(1, 3) = \rho(v_2, c_2)+\rho(v_3, c_2)\le \rho(v_2, a)+\rho(v_3, a).
$$ 
Again, adding up these inequalities, we obtain the desired result. 
\end{proof}

It now follows that any fast algorithm for $k$-LPP with the concave Monge property 
translates into an algorithm for {\sc CC-Winner-SC}, slowed down by a factor of $O(m)$ 
required for computing arc weights%
\footnote{Computing all arc weights in advance would be 
too expensive. Instead, we precompute the values $\sum_{\ell=1}^j \rho(v_\ell, c)$ for all 
$j\in [n]$, $c\in C$ in time $O(nm)$; then, when the algorithm needs to know
$\omega(i, j)$, we compute $\sum_{\ell=i+1}^j \rho(v_\ell, c)$ for each $c$
in time $O(1)$ as a difference of two precomputed quantities, and then compute the minimum over 
$C$ in time $O(m)$.}.
 
Now, if individual dissatisfactions are non-negative integers in range $[0:U]$ (e.g.~$U 
= m$ for the Borda misrepresentation function), then the weakly-polynomial algorithm of
\citet{bein_larmore_park} and 
\citet{worse_baruch} leads to an $O(nm \log(nU))$ algorithm for 
{\sc CC-Winner-SC}. Alternatively, we can use the strongly-polynomial time algorithm 
of \citet{baruch_klink_monge} to get a runtime of 
$nm2^{O\left(\sqrt{\log k\log\log n}\right)}$, which improves on our earlier bound of $O(nmk)$ for 
$k = \Omega(\log n)$.
We summarize these results in the following theorem.
\begin{theorem}
Given an instance of {\sc CC-Winner-SC} with $n$ voters, $m$ candidates 
and committee size $k$, where $k=\Omega(\log n)$,
we can compute an optimal solution in time $nm2^{O\left(\sqrt{\log k\log\log n}\right)}$.
Moreover, if $\rho$ is the Borda misrepresentation function, 
we can compute an optimal solution in time $O(nm\log(nm))$.
\end{theorem}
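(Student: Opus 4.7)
The plan is to combine the reduction already established in Theorem~\ref{thm:link} together with the concave Monge property (proven in the preceding proposition) with off-the-shelf algorithms for {\sc $k$-Link Path} on concave Monge DAGs. Given the input profile, I first build the DAG $G$ on vertex set $[0:n]$ with weight function $\omega$ as in equation~\eqref{eq:omega}; by Theorem~\ref{thm:link} the minimum total dissatisfaction equals the minimum weight of a $k$-link $0$-$n$ path in $G$, and by the proposition the weight function $\omega$ is concave Monge. Hence any algorithm for the {\sc $k$-Link Path} problem under the concave Monge assumption transfers directly, modulo the cost of computing arc weights.

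Next I need to handle arc-weight computation efficiently, since materialising all $\Theta(n^2)$ weights would be too expensive. I would precompute prefix sums $S[j,c] = \sum_{\ell=1}^{j}\rho(v_\ell, c)$ for all $j\in[n]$, $c\in C$, in time $O(nm)$ and space $O(nm)$. Then any query $\omega(i,j)$ can be answered in time $O(m)$ by taking a minimum over $c$ of $S[j,c]-S[i,c]$. So each arc weight is available on demand at an $O(m)$ overhead.

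With the weight oracle in place, I invoke the strongly polynomial algorithm of \citet{baruch_klink_monge} for {\sc $k$-Link Path} on concave Monge DAGs, which runs in time $n\cdot 2^{O\left(\sqrt{\log k\log\log n}\right)}$. Multiplying by the $O(m)$ weight-oracle cost yields $nm\cdot 2^{O\left(\sqrt{\log k\log\log n}\right)}$, which dominates the $O(nm)$ preprocessing step when $k=\Omega(\log n)$, giving the first bound. For the Borda case, individual dissatisfactions are integers in $[0:m-1]$, so I instead invoke the weakly polynomial algorithm of \citet{bein_larmore_park} (and \citeauthor{worse_baruch}), which runs in time $O(n\log(nU))$ on a concave Monge DAG with integer weights bounded by $U$; here $U=O(nm)$ (total weight along any arc), yielding $O(nm\log(nm))$ after the weight-oracle overhead.

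There is no substantial obstacle beyond combining these ingredients correctly; the only subtlety worth double-checking is the weight-oracle accounting, namely that the black-box $k$-LPP algorithms query only $\tilde O(n)$ arc weights (which is the case for both Bein--Larmore--Park / Aggarwal--Schieber--Tokuyama and Schieber), so that the $O(m)$ per-query overhead multiplies the total runtime cleanly rather than blowing up beyond the stated bounds. Reconstructing a winning committee (not just its cost) can then be done by standard backtracking on the $k$-LPP solution, using $\cand(i,j)$ to recover the candidate assigned to each contiguous block of voters.
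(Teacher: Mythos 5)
Your proposal is correct and follows essentially the same route as the paper: reduce to $k$-LPP via Theorem~\ref{thm:link}, invoke the concave Monge property, answer arc-weight queries in $O(m)$ via precomputed prefix sums, and plug in the cited strongly-polynomial (Schieber) and weakly-polynomial (Bein--Larmore--Park / Aggarwal et al.) algorithms. The only cosmetic difference is that you bound $U$ by the maximum arc weight $O(nm)$ rather than by the maximum individual dissatisfaction $m$ as the paper does, which yields the same $O(nm\log(nm))$ bound for Borda.
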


\section{Preferences Single-Crossing on a Tree}\label{sec:tree}
\citeauthor{cps15}~[\citeyear{cps15}] present an algorithm for {\sc CC-Winner-SCT}
that proceeds by dynamic programming, 
building a solution for the entire tree from solutions for various subtrees. Entries of their dynamic program are indexed by subtrees of the input tree, and on some instances the algorithm may need to consider all subtrees containing the root; a tree on $n$ vertices can have $2^{\Omega(n)}$
such subtrees. We present a detailed example in the appendix.%
\footnote{We have contacted the authors of the paper, and they have acknowledged this issue.}

\subsection{A Dynamic Programming Solution}
We will now present a different dynamic programming algorithm, which provably 
runs in polynomial time.
This algorithm, too, builds a solution iteratively by considering subtrees of the original
tree, but it proceeds in such a way that it only needs to consider polynomially many subtrees.

Fix a misrepresentation function $\rho$, and consider a profile $\calP=(\succ_v)_{v\in V}$, $|V|=n$, 
over $C=[m]$ that is single-crossing on a tree $T$.
We will view $T$ as a rooted tree with $v_1$ as its root, and
assume without loss of generality that $1\succ_{v_1}\ldots\succ_{v_1} m$.

We first reformulate our problem as a tree partition problem using Lemma~\ref{lem:subtree}.

\begin{definition}\label{def:treepartition}
A {\em $p$-partition} of $T$ is a partition of $T$ 
into $p$ subtrees $\calF = \{F_1, \dots, F_p\}$.
A $p$-assignment $w:V\to C$ for a profile $\calP=(\succ_v)_{v\in V}$
that is single-crossing on a tree $T$ is a {\em $p$-tree assignment}
if there is a $p$-partition $\{F_1, \dots, F_p\}$ of $T$
such that $w(v)=w(v')$ for each $\ell\in [p]$ and $v, v'\in F_\ell$. 
In the {\sc Chamberlin--Courant Tree Partition} (CCTP) problem 
the goal is to find a value $p\in [k]$ and a $p$-tree assignment $\opt$, together with the associated $p$-partition $\calF_{\mathit{opt}}$, such that
$\opt$ minimizes $\Phi_\rho(\calP, w)$ over all $p\in [k]$ and all 
$p$-tree assignments for $\calP$.
\end{definition}

\noindent By Lemma~\ref{lem:subtree}, an optimal assignment
for CCTP is an optimal assignment for the associated 
instance of {\sc CC-Winner-SCT}.

We start by presenting a dynamic programming algorithm for CCTP and proving a bound
of $O(nmk^2)$ on its running time; later, we will improve this bound 
to $O(nmk)$.

\begin{theorem}\label{thm:cctp-quad}
Given an instance of CCTP with $n$ voters, $m$ candidates and committee size $k$,
we can compute an optimal solution in time $O(nmk^2)$. 
\end{theorem}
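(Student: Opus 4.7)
The plan is to solve CCTP by a dynamic program that traverses $T$ rooted at $v_1$ in post-order, processes the children of each vertex one at a time, and keeps track of (i)~how many pieces of the partition have been opened in the subforest seen so far and (ii)~which candidate is assigned to the piece that currently contains the vertex being processed. The two structural ingredients, needed both to define the state space and to argue correctness, are Lemma~\ref{lem:subtree} (each candidate's preimage is a connected subtree of $T$) and Lemma~\ref{lem:increasing} (along every root-to-leaf path, the assigned candidates are non-decreasing, given our normalization $1\succ_{v_1}\cdots\succ_{v_1} m$).

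For each vertex $v$, each $i\in[1{:}n_v{+}1]$, each $j\in[k]$ and each $c\in[m]$, define $\dyp[v,i,j,c]$ to be the minimum total dissatisfaction of the voters in $T_{v,i}$ over partitions of $T_{v,i}$ into exactly $j$ connected pieces, each assigned a single candidate, such that $v$'s piece is assigned $c$ and candidates are non-decreasing along parent-child edges inside $T_{v,i}$. Since $T_{v,n_v+1}=\{v\}$, the base case is $\dyp[v,n_v{+}1,1,c]=\rho(v,c)$ and $+\infty$ for $j\neq 1$. For $i\le n_v$, let $u=\child[v,i]$, so $T_{v,i}=T_{v,i+1}\sqcup T_u$, glued at the edge $\{v,u\}$. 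If $u$ is assigned the same candidate $c$ as $v$ then $v$'s and $u$'s pieces merge, while if $u$ is assigned some $c'>c$ they do not; writing $j_1,j_2\ge 1$ for the numbers of pieces contributed by $T_{v,i+1}$ and $T_u$ respectively, this gives
\[
\dyp[v,i,j,c]=\min\!\Big\{\min_{j_1+j_2=j+1}\!\dyp[v,i{+}1,j_1,c]+\dyp[u,1,j_2,c],\;\min_{j_1+j_2=j}\!\dyp[v,i{+}1,j_1,c]+M[u,j_2,c]\Big\},
\]
where $M[u,j_2,c]=\min_{c'>c}\dyp[u,1,j_2,c']$ is precomputed as a suffix minimum. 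The CCTP optimum is $\min_{j\in[k],\,c\in[m]}\dyp[v_1,1,j,c]$, and the corresponding partition is recovered by standard backtracking.

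The step I expect to require the most care is justifying the piece-counting in the recurrence: specifically, that the only candidate that can appear in both the restriction of a canonical partition to $T_{v,i+1}$ and to $T_u$ is the candidate $c$ assigned to $v$. This follows from Lemma~\ref{lem:subtree}: any shared candidate $c^*$ would have a connected preimage containing vertices on both sides of the edge $\{v,u\}$, forcing the entire connecting path---in particular both $v$ and $u$---to be assigned $c^*$, so $c^*=c$. The monotonicity constraint of Lemma~\ref{lem:increasing} is enforced by letting the second case of the minimum range only over $c'>c$ via $M$.

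For the complexity, since $\sum_v(n_v+1)=2n-1$, the table has $O(nmk)$ cells; precomputing $M$ costs $O(nmk)$ via a single suffix-minimum pass in $c$ for each $(u,j_2)$, after which each cell is filled in $O(k)$ time through the convolution over $j_1$. The total running time is thus $O(nmk^2)$, as claimed.
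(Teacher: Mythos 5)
Your proposal is correct and follows essentially the same route as the paper: your table $\dyp[v,i,j,c]$ is the paper's $\dyp_2$, your suffix minimum $M[u,j_2,c]$ is the paper's $\dyp_0[u,j_2,c+1]$, and your two-branch recurrence (merge with $j_1+j_2=j+1$ versus separate with $j_1+j_2=j$) is exactly the paper's \textit{SAME}/\textit{DIFF} split, justified by the same appeal to Lemmas~\ref{lem:subtree} and~\ref{lem:increasing}. The complexity accounting ($O(nmk)$ cells, $O(k)$ per cell) also matches the paper's $O(nmk^2)$ analysis.
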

\begin{proof}
We will explain how to find the value of an optimal solution in time $O(nmk^2)$;
an optimal solution can then be recovered using standard dynamic programming techniques.

We define the following subproblems for each $v\in V$ and $c\in [m]$.

\begin{itemize}
\item 	For each $\ell=1, \dots, \min\{k, |T_v|\}$,\\ let
	$\dyp_0[v, \ell, c]$  be the minimal dissatisfaction of voters in $T_v$ 
                             that can be achieved by partitioning $T_{v}$ into 
                             $\ell$ subtrees using only candidates in $[c : m]$ 
                             as representatives. 
\item 	For each $\ell=1, \dots, \min\{k, |T_v|\}$, \\ let
	$\dyp_1[v, \ell, c]$  be the minimal dissatisfaction of voters in $T_v$ 
			     that can be achieved by partitioning $T_{v}$ into 
			     $\ell$ subtrees using only candidates in $[c : m]$ 
		             as representatives, with voter $v$ represented by candidate $c$.
\item 	For each $i\in[n_v+1]$, and each $\ell=1, \dots, \min\{k, |T_{v, i}|\}$, \\ let
	$\dyp_2[v, i, \ell, c]$ be the minimal dissatisfaction of voters in $T_{v, i}$ 
			     that can be achieved by partitioning $T_{v, i}$ into 
			     $\ell$ subtrees using only candidates in $[c : m]$ 
                             as representatives, with voter $v$ represented by candidate $c$.
\end{itemize}

To simplify presentation, we assume these quantities to take value $\infty$ for $v, i, \ell, c$ not satisfying the conditions.

Clearly, the value of an optimal solution to our instance of CCTP is 
$\min_{\ell\in[k]}\dyp_0[v_1, \ell, 1]$.
It remains to explain how to compute the intermediate quantities.

The following observations are immediate from the definitions of $\dyp_0$, $\dyp_1$, $\dyp_2$:
\begin{align}
    \label{l1}& \dyp_2[v, n_v + 1, 1, c] = \rho(v, c), \\
    \label{l2}& \dyp_1[v, \ell, c] = \dyp_2[v, 1, \ell, c], \\
    \label{l3}& \dyp_0[v, \ell, c] = \min\{\dyp_1[v, \ell, c], \dyp_0[v, \ell, c + 1]\}.
\end{align}
The next lemma explains how to compute $\dyp_2$.

\begin{lemma}\label{lem:recurrence} 
Let $u$ be the $i$-th child of $v$, and let $s= |T_{v, i+1}|$.
Then $\dyp_2[v, i, \ell, c] = \min\{\textit{DIFF}, \textit{SAME}\}$, where

\begin{align*}
    \textit{DIFF} &= \min\{\dyp_0[u, t, c + 1] + \dyp_2[v, i + 1, \ell - t, c]: \\
                           &1\leq t      \leq \min\{\ell, |T_u|\},\
                           1\leq \ell-t  \leq \min\{\ell, s\} \}, \\
    \textit{SAME} &= \min\{\dyp_1[u, t, c]     + \dyp_2[v, i + 1, \ell - t + 1, c]: \\
                           &1\leq t      \leq \min\{\ell, |T_u|\},\ 
                           1\leq \ell-t+1 \leq \min\{\ell, s\} \}.
\end{align*}
\end{lemma}
\begin{proof}
Throughout this proof we will make implicit use of Lemma~\ref{lem:increasing} 
to justify always having candidates with a higher index be further down in the tree.
Let $(\calF_{\mathit{opt}}, \opt)$ be an optimal connected $\ell$-tree partition of $T_{v, i}$ 
such that voter $v$ is assigned candidate $c$. 
There are two cases to consider:

\begin{itemize}
\item
	Voter $u$ is represented by a candidate $c' > c$. 
	Then each subtree in $\calF_{\mathit{opt}}$ 
	is either fully contained in $T_u$ or fully contained in $T_{v, i + 1}$, 
	so there is a number $t\in [\ell]$ such that 
	$\calF_{\mathit{opt}}$ partitions $T_u$ into $t$ subtrees and 
	$T_{v, i + 1}$ into $\ell - t$ subtrees. Hence, to minimize 
	dissatisfaction, we take the minimum 
	over all $t\in [\ell]$. For a fixed $t\in[\ell]$ we choose 
	(i) an optimal $t$-partition of $T_u$ that uses candidates 
	in $[c + 1 : m]$ only and (ii) an optimal $(\ell-t)$-partition of $T_{v, i + 1}$ 
	that uses candidates in $[c : m]$ only and assigns $c$ to $v$. 
	The optimal values of the former 
	and the latter are given by $\dyp_0[u, t, c + 1]$ 
	and $\dyp_2[v, i + 1, \ell - t, c]$, respectively.
\item
	Voter $u$ is represented by candidate $c$. In this case the 
	subtree in $\calF_{\mathit{opt}}$ that contains $v$ may partly reside in both $T_u$ and 
	$T_{v, i + 1}$. Therefore, there is a number $t\in [\ell]$ 
        such that $\calF_{\mathit{opt}}$ 
	partitions $T_u$ into $t$ subtrees and $T_{v, i + 1}$ into $\ell - t+ 1$ subtrees. 
	Once again, we take the minimum over all $t\in [\ell]$. For a fixed $t\in [\ell]$ 
	we choose (i) an optimal $t$-partition of $T_u$ that uses candidates in $[c:m]$ only 
	and assigns $c$ to $u$ and (2) an optimal $(\ell-t+1)$-partition of $T_{v, i + 1}$ 
	that uses candidates in $[c : m]$ only and assigns $c$ to $v$. 
	By definition, the optimal values 
	for these subproblems are given by $\dyp_1[u, t, c]$ and 
	$\dyp_2[v, i + 1, \ell - t + 1, c]$, respectively.
\end{itemize}
\end{proof}
Note that the assignment implicitly computed by our dynamic programming algorithm is not 
necessarily connected; however, this is not required for optimality.

Our dynamic program proceeds from the leaves to the root of $T$, 
computing the quantities $\dyp_0, \dyp_1$ and $\dyp_2$; 
we process a vertex after its children have been processed.
Computing all these quantities is trivial if $v$ is a leaf;
if $v$ is not a leaf, we first compute $\dyp_2[v, i, \ell, c]$ for all $i=|n_v|+1, \dots, 1$
and all relevant values of $\ell$ and $c$ using~\eqref{l1} and Lemma~\ref{lem:recurrence}, 
and then compute $\dyp_1[v, \ell, c]$ (using~\eqref{l2}) and $\dyp_0[v, \ell, c]$
(using~\eqref{l3}) for $c=m, \dots, 1$.
To bound the running time, note that
\begin{itemize}
    \item there are $O(nmk)$ subproblems of the form $\dyp_0[-, -, -]$ and $\dyp_1[-, -, -]$, 
	  each requiring constant time to solve;
    \item there are $O(nmk)$ subproblems of the form $\dyp_2[-, -, -, -]$. 
          This is because pairs of the form ($v$, $i$) such that $1 \leq i \leq n_v$ 
          correspond to edges of the tree and there are precisely $n-1$ of them. 
          Each of these subproblems can be solved in time $O(k)$ by Lemma~\ref{lem:recurrence}.
    \item The tree can be traversed in time $O(n)$.
\end{itemize}
Altogether, we get a time bound of $O(nmk^2)$.
\end{proof}

\subsection{Tighter Analysis of the Running Time}
We will now show how to improve the bound on the running time 
of our algorithm to $O(nmk)$. 
To do so, it suffices to establish that
all subproblems of the form $\dyp_2[-, -, -, -]$ can be solved in time $O(nmk)$. 

The following technical lemma is an important building block in our analysis (see the online appendix for the proof).

\begin{lemma}\label{lem:product}
Consider a voter $v$ and a candidate $c$, 
and let $u$ be the $i$-th child of $v$ for some $i\in [n_v]$.
Then all subproblems of the form 
$\dyp_2[v, i, -, c]$ can be solved in time 
$O(\min\{k, |T_u|) \cdot \min\{k, |T_{v, i + 1}|\})$. 
\end{lemma}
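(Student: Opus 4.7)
My plan is a direct counting argument on the recurrence of Lemma~\ref{lem:recurrence}. Once the tables $\dyp_0[u, \cdot, c + 1]$, $\dyp_1[u, \cdot, c]$, and the previously-computed values $\dyp_2[v, i + 1, \cdot, c]$ are in hand, every individual term appearing inside the \textit{DIFF} or \textit{SAME} minima can be looked up and combined in $O(1)$ time. So it is enough to bound the total number of such terms enumerated while computing $\dyp_2[v, i, \ell, c]$ across all admissible $\ell \in [1, \min\{k, |T_{v, i}|\}]$.

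The key step is a change of variables. Write $a = \min\{k, |T_u|\}$ and $b = \min\{k, |T_{v, i + 1}|\}$. In \textit{DIFF}, I would reparametrize each term by the pair $(t, r)$ with $r = \ell - t$, subject to $1 \le t \le \min\{\ell, |T_u|\}$ and $1 \le r \le \min\{\ell, |T_{v, i + 1}|\}$. Since $(t, r)$ determines $\ell = t + r$ uniquely, and the constraint $\ell \le k$ becomes $t + r \le k$, the enumerated pairs inject into $\{(t, r) : 1 \le t \le a,\ 1 \le r \le b\}$, which has cardinality at most $ab$. The \textit{SAME} terms admit the analogous substitution $r = \ell - t + 1$ with the same $O(ab)$ bound. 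Summing these two contributions gives total work $O(\min\{k, |T_u|\} \cdot \min\{k, |T_{v, i + 1}|\})$, as required.

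I do not anticipate a serious technical obstacle; the only point requiring care is checking that the original constraint $\ell \le \min\{k, |T_{v, i}|\}$ is automatically respected after reparametrization, which follows from $|T_{v, i}| = |T_u| + |T_{v, i + 1}|$ together with $t \le |T_u|$ and $r \le |T_{v, i + 1}|$. The main conceptual move — and what distinguishes the bound from the naive $O(k \cdot \min\{k, |T_u|\})$ one obtained by fixing $\ell$ first — is precisely the switch from the $(\ell, t)$ indexing to the $(t, r)$ indexing, which amortizes the inner-loop work across all values of $\ell$.
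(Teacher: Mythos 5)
Your proposal is correct and follows essentially the same route as the paper's proof: the reparametrization $(\ell, t)\mapsto(t, \ell-t)$ is exactly the bijection $(\ell,t)\mapsto(\ell-t,t)$ used there, the injection of all enumerated pairs into $\{(t,r): 1\le t\le \min\{k,|T_u|\},\ 1\le r\le \min\{k,|T_{v,i+1}|\}\}$ is the same counting step, and the identity $|T_{v,i}|=|T_u|+|T_{v,i+1}|$ plays the identical role in discharging the constraint $\ell\le\min\{k,|T_{v,i}|\}$. No gaps.
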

\begin{proof} 
We first look at the time required to compute $\textit{DIFF}$ 
from Lemma~\ref{lem:recurrence} for all $\ell$ with 
$1 \leq \ell \leq \min\{k, |T_{v, i}|\}$. For each such $\ell$ 
define a set of pairs
\begin{align*}
    M_{\ell} = \{(\ell, t) :\ &1 \leq t \leq \min\{\ell, |T_u|\}\text{ and}\\ 
                              &1 \leq \ell - t \leq \min\{\ell, |T_{v, i + 1}|\}\};
\end{align*}
the pairs in $M_\ell$ appear in the expression for $\textit{DIFF}$ 
when solving subproblems of the form 
$\dyp_2[v, i, \ell, c]$. Clearly, the time it takes to compute $\textit{DIFF}$ for all  
values of $\ell$ is asymptotically bounded by the total size of these sets. 
Now, apply the bijective map 
$(\ell, t) \mapsto (\ell - t, t)$ to each $M_{\ell}$ to get a new collection of sets
\begin{align*}
    M_{\ell}' = \{(\ell - t, t) :\ &1 \leq t \leq \min\{\ell, |T_u|\}\text{ and}\\ 
                                   &1 \leq \ell - t \leq \min\{\ell, |T_{v, i + 1}|\}\}
\end{align*}
\noindent with the same total cardinality. Their union is a set
\begin{alignat*}{3}
M' &= \{(\ell - t, t) :\ &&1 \leq \ell \leq \min\{k, |T_{v, i}|\},\\
   &                     &&1 \leq t \leq \min\{\ell, |T_u|\}\text{ and}\\ 
   &                     &&1 \leq \ell - t \leq \min\{\ell, |T_{v, i + 1}|\}
\}
\end{alignat*}
\begin{alignat*}{3}
   &= \{(\ell - t, t) :\ &&1 \leq \ell - t + t \leq \min\{k, |T_{v, i}|\},\\
   &                     &&1 \leq t \leq \min\{\ell - t + t, |T_u|\}\text{ and}\\
   &                     &&1 \leq \ell - t \leq \min\{\ell - t + t, |T_{v, i + 1}|\}\}
\end{alignat*}
\begin{alignat*}{3}
   &= \{(x, y) :\                &&1 \leq x + y \leq \min\{k, |T_{v, i}|\},\\
   &                             &&1 \leq y \leq \min\{x + y, |T_u|\}\text{ and}\\ 
   &                             &&1 \leq x \leq \min\{x + y, |T_{v, i + 1}|\}\}
\end{alignat*}
\begin{alignat*}{3}
   &= \{(x, y) :\                &&x + y \leq k, 1 \leq y \leq |T_u|, 
                                   1 \leq x \leq |T_{v, i + 1}|\}.
\end{alignat*}
\noindent To obtain the last equality we make use of the fact that 
$|T_{v, i}| = |T_u| + |T_{v, i + 1}|$. We can now relax the constraints 
$x, y\ge 1$, $x + y \leq k$ 
to $1 \leq x, y \leq k$ to conclude that 
$|M'| \leq \min\{k, |T_u|\} \cdot \min\{k, |T_{v, i + 1}|\}$. 
Similar analysis shows that the same bound also holds for $\textit{SAME}$, hence completing the proof.
\end{proof}

Before proving the stronger $O(nmk)$ bound, we first show an easier bound of $O(n^2m)$, which is 
tight when $k = n$ and better than $O(nmk^2)$ whenever $k = \omega(n^{1/2})$. Proving this is both 
informative in itself, helping to build intuition, and will also provide us with a tool useful for 
the general argument. The $O(n^2m)$ bound is immediate from the following lemma (inspired by \citet{barricades}).

\begin{lemma}\label{lem:n_squared} 
For each candidate $c \in C$ solving all subproblems of the form 
$\dyp_2[-, -, -, c]$ using Lemma~\ref{lem:recurrence} takes time $O(n^2)$.
\end{lemma}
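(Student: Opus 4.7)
The plan is to combine Lemma~\ref{lem:product} with a standard tree-DP double-counting argument. Fix $c$. By Lemma~\ref{lem:product}, for each pair $(v, i)$ with $1 \le i \le n_v$, computing all values $\dyp_2[v, i, -, c]$ takes time $O(\min\{k, |T_{\child[v, i]}|\} \cdot \min\{k, |T_{v, i+1}|\})$; dropping the truncations at $k$, this is at most $O(|T_{\child[v, i]}| \cdot |T_{v, i+1}|)$. Hence it suffices to prove
$$
S \;=\; \sum_{v \in V}\sum_{i=1}^{n_v} |T_{\child[v, i]}| \cdot |T_{v, i+1}| \;=\; O(n^2).
$$
(The leaves contribute only through~\eqref{l1}, which is $O(n)$ in total for fixed $c$, so they can be absorbed.)

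To bound $S$, I would interpret the expanded product combinatorially. Each term $|T_{\child[v, i]}| \cdot |T_{v, i+1}|$ counts ordered pairs $(x, y)$ with $x \in T_{\child[v, i]}$ and $y \in T_{v, i+1}$; since these two subtrees are vertex-disjoint, necessarily $x \neq y$. So $S$ equals the number of ordered pairs $(x, y) \in V \times V$ with $x \neq y$ that admit \emph{some} witness $(v, i)$ of this form. The core claim is that each such pair admits at most one witness, from which $S \le n(n-1) = O(n^2)$ follows.

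To establish uniqueness I would case-split on $w = \mathrm{lca}(x, y)$. If $w = y$, then $y$ is an ancestor of $x$, and the unique witness is $v = y$ with $i$ determined by $x \in T_{\child[y, i]}$ (using the fact that $v \in T_{v, i+1}$ for all $i$). If $w \notin \{x, y\}$, write $x \in T_{\child[w, i]}$, $y \in T_{\child[w, j]}$ with $i \ne j$; the pair has a witness iff $i < j$, in which case $v = w$ and the index $i$ are forced. Finally, if $w = x$, then $y$ is a proper descendant of $x$; for any ancestor $v'$ of $x$ and the unique $i'$ with $x \in T_{\child[v', i']}$, the descendant $y$ lies in the same $T_{\child[v', i']}$, so $y \notin T_{v', i'+1}$ and no witness exists. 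In all cases at most one witness exists, completing the bound.

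The main point requiring care is the LCA case analysis, in particular not forgetting that $v$ itself belongs to $T_{v, i+1}$, which is what captures the $w = y$ case; otherwise one could wrongly conclude that only sibling-subtree pairs are counted and lose the easy charging. Once the three LCA cases are handled, the $O(n^2)$ bound is immediate and yields the lemma.
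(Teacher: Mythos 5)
Your proof is correct and takes essentially the same approach as the paper: both interpret each product $|T_{\child[v,i]}|\cdot|T_{v,i+1}|$ as counting vertex pairs and charge every pair of distinct vertices to at most one term of the sum, yielding the $\binom{n}{2}$ bound. The paper phrases the uniqueness step more compactly by observing that the terms at a fixed $v$ count exactly the unordered pairs whose lowest common ancestor is $v$; your ordered-pair witness-uniqueness case analysis is the same argument spelled out in more detail.
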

\begin{proof} 
By Lemma~\ref{lem:product}, the time required to solve all subproblems of the form 
$\dyp_2[-, -, -, c]$ is asymptotically bounded by
\[
\sum_{v\in V, 1 \leq i \leq n_v}
\left(|T_{\child[v, i]}| \cdot |T_{v, i + 1}|\right).
\]

The quantity $|T_{\child[v, i]}| \cdot |T_{v, i + 1}|$ 
can be interpreted as the number of pairs of vertices $(v_1, v_2)$ 
such that $v_1 \in T_{\child[v, i]}$ and 
$v_2 \in T_{v, i + 1}$. Note that
for all such pairs, the lowest common ancestor of $v_1$ and 
$v_2$ is $v$. Thus, if we sum this quantity over all $i\in [n_v]$, we get precisely the 
number of unordered pairs of distinct vertices whose lowest common ancestor is $v$. 
It is now immediate that, if we 
further sum this quantity over all $v\in V$, we get precisely $n \choose 2$, which is the 
number of unordered pairs of distinct nodes in a tree with $n$ vertices, completing the proof.
\end{proof}

By summing up the $O(n^2)$ terms from Lemma~\ref{lem:n_squared} over all $c\in C$, 
and observing that CCTP becomes trivial if $k\ge n$ (we can then afford to include the top
choice of each voter), we obtain the following bound.
\begin{theorem} 
Solving all subproblems of the form $\dyp_2[-, -, -, -]$ using Lemma~\ref{lem:recurrence} 
takes time $O(n^2m)$. Hence, CCTP can be solved in time $O(n^2m)$.
\end{theorem}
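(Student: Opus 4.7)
The plan is to simply sum the per-candidate bound from Lemma~\ref{lem:n_squared} over all $m$ candidates, and then argue that the remaining steps of the dynamic program (namely computing $\dyp_0$ and $\dyp_1$) do not affect the overall bound. First I would observe that by Lemma~\ref{lem:n_squared}, for each fixed $c \in C$ the total work to compute all entries $\dyp_2[-,-,-,c]$ via the recurrence of Lemma~\ref{lem:recurrence} is $O(n^2)$; summing over the $m$ candidates gives $O(n^2m)$ total work for all $\dyp_2$ subproblems.

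Next, I would handle the remaining quantities. By equations~\eqref{l2} and~\eqref{l3}, each entry of $\dyp_1[-,-,-]$ and $\dyp_0[-,-,-]$ can be computed in $O(1)$ time once the relevant $\dyp_2$ (resp.\ $\dyp_1$) entries are known. The number of these entries is $O(nmk)$, but in our setting we may assume $k \leq n$: if $k \geq n$, then CCTP is trivial since assigning every voter their top choice yields at most $n \leq k$ distinct representatives and achieves the minimum possible total dissatisfaction of $\sum_{v} \min_{c} \rho(v,c)$, matching the lower bound term by term. Hence $O(nmk) = O(n^2m)$, and the total cost of computing $\dyp_0, \dyp_1$ is absorbed into the $O(n^2m)$ bound.

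Finally I would observe that tree traversal contributes only $O(n)$ per candidate, and that the precomputation for evaluating $\rho$ values is negligible, so the overall running time for the dynamic program, and hence for CCTP, is $O(n^2m)$. The only mildly subtle point is the reduction to $k \leq n$, but this follows immediately from the observation that once each voter is matched to their top candidate, no larger committee can do strictly better; beyond that the argument is pure bookkeeping on top of Lemma~\ref{lem:n_squared}.
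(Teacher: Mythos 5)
Your proposal is correct and matches the paper's argument exactly: the paper also obtains the bound by summing the $O(n^2)$ per-candidate cost from Lemma~\ref{lem:n_squared} over all $m$ candidates and noting that CCTP is trivial for $k\ge n$ (so one may assume $k\le n$, absorbing the $O(nmk)$ cost of the $\dyp_0,\dyp_1$ entries into $O(n^2m)$). Your write-up merely spells out the bookkeeping that the paper leaves implicit.
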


We are now ready to prove the $O(nmk)$ time bound. Just as in Lemma~\ref{lem:n_squared}, 
it suffices to bound the time required to solve all subproblems 
of the form $\dyp_2[-, -, -, c]$ for each $c\in C$.

\begin{lemma}\label{lem:n_k} 
For each candidate $c\in C$ solving all subproblems of the form $\dyp_2[-, -, -, c]$ 
takes time $O(nk)$.
\end{lemma}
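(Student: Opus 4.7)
The plan is to apply Lemma~\ref{lem:product} and reduce to proving
\[
S := \sum_{v \in V}\sum_{i=1}^{n_v} \min\{k, |T_{\child[v,i]}|\}\cdot\min\{k, |T_{v, i+1}|\} = O(nk).
\]
We may assume $k < n$, since CCTP is trivial for $k \ge n$ (each voter picks her top candidate).

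I would first reinterpret $S$ combinatorially, extending the pair-counting from Lemma~\ref{lem:n_squared}. That lemma reads $|T_{\child[v,i]}| \cdot |T_{v, i+1}|$ as the number of ordered pairs of distinct vertices $(x, y)$ with $x$ in the left subtree and $y$ in the right, each pair uniquely assigned to $(v, i)$ via its LCA. For the capped version, $\min\{k, a\} \cdot \min\{k, b\}$ counts such pairs where $x$ lies in some chosen size-$\min\{k, a\}$ subset of $T_{\child[v,i]}$ and $y$ in a size-$\min\{k, b\}$ subset of $T_{v, i+1}$. The natural choice is to take the first $k$ vertices under a global DFS ordering of $T$; then a vertex $x$ lies among the first $k$ of at most $k$ ancestor-subtrees, since the DFS indices of such ancestors are distinct integers in the length-$k$ window ending at $x$'s DFS index, and the symmetric statement holds on the other side.

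The main obstacle will be sharpening the naive product of the two per-vertex bounds, which only yields $O(nk^2)$, to the target $O(nk)$. To do so, I would partition the merges $(v, i)$ by the size of the output subtree $|T_{v, i}| = |T_{\child[v,i]}| + |T_{v, i+1}|$ and analyze each class separately. For \emph{small} merges with $|T_{v, i}| \le 2k$, the cost is bounded by the uncapped product $|T_{\child[v,i]}| \cdot |T_{v, i+1}|$, and the pair-counting of Lemma~\ref{lem:n_squared} restricted to pairs whose LCA lies in a size-$2k$ subregion controls their total by $O(nk)$. For \emph{large} merges where both sides have size at least $k$, each merge costs $\Theta(k^2)$, but an amortized charging to the vertices of the output subtree (whose size exceeds $2k$, and which can be distributed across merges by tracking the doubling of output sizes along root-to-leaf chains) shows that across all large merges the total cost is also $O(nk)$. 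Summing the contributions from both classes of merges yields the desired $S = O(nk)$, completing the proof.
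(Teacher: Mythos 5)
Your overall target and two of your three claimed bounds are sound, but the case analysis does not cover all merges, and the uncovered class is exactly where the real work lies. You split merges into \emph{small} ones, with output size $|T_{v,i}| \le 2k$, and \emph{large} ones, in which both sides have size at least $k$. A merge can have output size greater than $2k$ while one side has size less than $k$ (e.g., attaching a single leaf to a component of size $5k$); on a path rooted at an endpoint almost every merge is of this type, and their total capped cost is $\Theta(nk)$, so these merges cannot be absorbed into either of your two bounds and genuinely need their own argument. The clean way to handle them (and, simultaneously, your ``large'' class) is the potential function used in the paper: let $\lambda = \sum_i \min\{k, f_i\}$ over the current component sizes $f_i$; any merge in which at least one side exceeds $k$ decreases $\lambda$ by $\min\{k, f_j\}$ (where $f_j$ is the other side) while adding exactly $k \cdot \min\{k, f_j\}$ to $S$, and since $\lambda$ starts at most $n$ and never increases, all such merges together contribute at most $nk$.

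Two smaller points. Your bound for the small merges is correct and is a pleasant alternative to the paper's argument (which instead runs Lemma~\ref{lem:n_squared} separately on each component of size at most $2k$ present at the end of a first phase in which only pairs of components of size at most $k$ are joined): for each vertex $x$ the subtrees $T_{v,i}$ containing $x$ form a chain under inclusion, so every pair counted at a small merge pairs $x$ with one of at most $2k$ vertices. Your amortization for the both-sides-$\ge k$ merges is under-specified --- ``doubling of output sizes along root-to-leaf chains'' only controls the number of such merges per chain, not globally --- but the claim is true and easy to repair: the minimal components of size at least $k$ feeding into such merges are pairwise disjoint, so there are at most $n/k$ such merges, each costing $k^2$. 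Finally, the DFS-window reinterpretation in your first paragraph is not needed for any of this and, as you note yourself, yields only $O(nk^2)$ on its own.
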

\begin{proof} 
Let us revisit the expression for the time $S$ needed to solve all subproblems of this form:
\begin{equation}\label{equ_sum}
S = \sum_{v\in V, 1 \leq i \leq n_v}
(\min\{k, |T_{\child[v, i]}|\} \cdot \min\{k, |T_{v, i + 1}|\}).
\end{equation}
Note that the pairs $(v, i)$ in the summation index correspond to the edges of the tree.
This observation suggests
a new way of computing $S$ based on incrementally 
building the tree starting from $n$ singleton vertices.
Namely, we start with $S = 0$ and an empty graph $G$ consisting of $n$ disconnected singleton vertices, 
and repeat the next two steps until $G$ becomes isomorphic to $T$:
\begin{enumerate}
    \item\label{item_2} 
        Pick an edge $\{v, v'\}$ of $T$ that has not been chosen before 
	(where $v'$ is a child of $v$ in $T$) and 
        connect $v$ and $v'$ in $G$. This edge corresponds to a pair 
        $(v, i)$ such that $v'$ is the $i$-th child of $v$. 
        We call this operation a \emph{$(v, i)$-join}. A $(v, i)$-join 
        can only take place if all $(v, i')$-joins with $i'>i$ have already been performed 
        and the connected component of $v'$ in $G$ is isomorphic to~$T_{v'}$.
    \item 
	Increase $S$ by $\min\{k, |T_{v'}|\} \cdot \min\{k, |T_{v, i + 1}|\}$.
\end{enumerate}
We observe that at each step of this procedure the graph $G$ is a forest, and each component tree
of $G$ is of the form $T_{v, i}$ for some $v\in V$ and $1\le i\le n_v+1$.
Moreover, valid orders of joining the connected components of $G$
correspond to valid orders of solving all the subproblems of the form $\dyp_2[-, -, -, c]$, 
and the final value of $S$ (given in equation (\ref{equ_sum})) does not depend
on the order selected. In particular, for the purposes of our analysis 
it will be convenient to split the process into two phases: in the first phase, we will only join
two connected components if each of them has at most $k$ vertices, and in the second phase
we will perform the remaining joins.
Accordingly, let $S_1$ and $S_2$ denote the amounts added to $S$ in the first and the second
phase, respectively, so that $S=S_1 + S_2$. We will now argue that $S_1=O(nk)$ and $S_2=O(nk)$.

\begin{claim}\label{lem:S1}
$S_1=O(nk)$.
\end{claim}
\begin{proof}
    At the end of the first phase, the graph $G$ is a forest consisting of $p$ trees
    $T_{u_1, i_1}, T_{u_2, i_2}, \ldots, T_{u_p, i_p}$. 
    This state of $G$ corresponds to 
    having solved all subproblems of the form 
    $\dyp_2[v, i, -, c]$ on which $\dyp_2[u_1, i_1, -, c], \ldots, \dyp_2[u_p, i_p, -, c]$ depend 
    (possibly indirectly, and including the problems themselves), 
    and no others. This is the same as performing the complete algorithm restricted to each of 
    $T_{u_1, i_1}, T_{u_2, i_2}, \ldots, T_{u_p, i_p}$ individually. Thus, we can bound
    $S_1$ by applying Lemma~\ref{lem:n_squared} to each connected component 
    and summing up the results:
    \[
    S_1 \leq |T_{u_1, i_1}|^2 + |T_{u_2, i_2}|^2 + \ldots + |T_{u_p, i_p}|^2.
    \]
    Since each such connected component has been generated by joining two connected components 
    of size at most $k$, we can bound their individual sizes by $2k$. It follows that
    \[
    S_1 \leq 2k \cdot (|T_{u_1, i_1}| + |T_{u_2, i_2}| + \ldots + |T_{u_p, i_p}|) \le 2nk.
    \]
\end{proof}    

\begin{claim}\label{lem:S2}
$S_2=O(nk)$.
\end{claim}
\begin{proof}
    Given a sequence of $p$ integers $(a_1, \dots, a_p)$,
    let 
    \[
        \lambda(a_1, \dots, a_p) = \min\{k, a_1\} + \min\{k, a_2\} + \ldots + \min\{k, a_p\}.
    \]
    Suppose that at the start of the second phase the graph $G$ has $s$ components, 
    and let $(b_1, \dots, b_s)$ be the list of sizes of these components. 
    Note that $b_1+\dots+b_s=n$ and hence $\lambda(b_1, \dots, b_s)\le n$.
    Further, suppose that at some point during the second phase the list
    of sizes of the components of $G$ is given by $(f_1, \dots, f_q)$, and
    consider a join operation merging together two connected components of sizes $f_i$ and $f_j$. 
    At least one of these components has size greater than $k$;
    without loss of generality, assume that $f_i > k$. This operation removes $f_i$ and $f_j$ from 
    the list $(f_1, \dots, f_q)$. This changes the value of $\lambda$ by removing a 
    $\min\{k, f_i\} + \min\{k, f_j\} = k + \min\{k, f_j\}$ term and adding back a 
    $\min\{k, f_i + f_j\} = k$ term, thus decreasing 
    $\lambda$ by $\min\{k, f_j\}$. On the other hand, this operation increases $S_2$ by 
    $\min\{k, f_i\} \cdot \min\{k, f_j\} = k \cdot \min\{k, f_j\}$. Therefore, 
    whenever $\lambda$ decreases by $\Delta$, $S_2$ increases by 
    $k\Delta$. Since $\lambda$ can only ever decrease, starts off bounded from above by $n$ and never 
    becomes negative, $S_2$ is bounded from above by $nk$, completing the proof.    
\end{proof}
Now Lemma~\ref{lem:n_k} follows by combining Claims~\ref{lem:S1} and~\ref{lem:S2}. 
\end{proof}
As argued earlier in the paper, Lemma~\ref{lem:n_k} immediately
implies the desired bound on the performance of our algorithm.
\begin{theorem} 
{\sc CC-Winner-SCT} can be solved in time $O(nmk)$.
\end{theorem}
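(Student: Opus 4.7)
The plan is to assemble the bound directly from the machinery already in place. The dynamic programming algorithm from Theorem~\ref{thm:cctp-quad} solves CCTP (which is equivalent to \textsc{CC-Winner-SCT} via Lemma~\ref{lem:subtree}) by computing three families of subproblem values: $\dyp_0$, $\dyp_1$, and $\dyp_2$. The analysis there already shows that the $\dyp_0$ and $\dyp_1$ entries together number $O(nmk)$ and that each one can be computed in $O(1)$ time via the relations \eqref{l2} and \eqref{l3} once $\dyp_2$ is available. Thus the only remaining bottleneck is the cost of computing all $\dyp_2[-,-,-,-]$ entries, which the naive analysis bounds by $O(nmk^2)$.

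The improvement comes entirely from Lemma~\ref{lem:n_k}, which asserts that for each fixed candidate $c \in C$, all $\dyp_2[-,-,-,c]$ entries can be computed in $O(nk)$ time. Summing this bound over the $m$ candidates yields $O(nmk)$ for the entire $\dyp_2$ table. Combining this with the $O(nmk)$ cost for $\dyp_0$ and $\dyp_1$, and the $O(n)$ tree traversal, gives the desired $O(nmk)$ total running time. The final optimal committee value is then read off as $\min_{\ell \in [k]} \dyp_0[v_1, \ell, 1]$, and standard dynamic programming traceback recovers an actual winning committee within the same time bound. No further obstacles remain: all the substantive work, including the amortized analysis over the join process, has already been carried out in Claims~\ref{lem:S1} and \ref{lem:S2} which together yield Lemma~\ref{lem:n_k}.
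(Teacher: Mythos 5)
Your proposal is correct and follows exactly the paper's intended argument: the theorem is obtained by combining the $O(nmk)$-entry, constant-time-per-entry accounting for $\dyp_0$ and $\dyp_1$ from Theorem~\ref{thm:cctp-quad} with Lemma~\ref{lem:n_k}'s $O(nk)$-per-candidate bound for the $\dyp_2$ table, summed over the $m$ candidates. Nothing is missing.
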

\section{Preferences Single-Crossing on a Grid}\label{sec:grid}

In this section we will discuss the challenges we face when trying to extend our results for 
{\sc CC-Winner} beyond trees. In particular, we present a useful lemma for preferences 
single-crossing on grids, which enables us to design a bicriterial approximation algorithm 
for {\sc CC-Winner-SCG}, as well as a polynomial-time algorithm under an 
additional (plausible) conjecture.

This section is structured as follows. First, we introduce the concept of laminar tilings for 
grid graphs. Then we provide a polynomial-time dynamic programming algorithm for {\sc 
CC-Winner-SCG} on grid graphs under the assumption that optimal solutions correspond to 
laminar tilings. Finally, we drop this assumption and argue that our dynamic programming 
algorithm can be transformed into a bicriterial approximation algorithm for our problem. 

Assume that $C = [m]$ and that the set of voters is $V = [n_1] \times [n_2]$, so that the 
preference profile $\calP=(\succ_v)_{v\in V}$ is single-crossing on the $n_1 \times n_2$ 
grid. For readability, we will write $\rho(i, j, c)$ to denote the dissatisfaction of voter 
$(i, j)$ with candidate $c$.

We begin by proving an analogue of Lemma~1 for grid graphs.
\begin{lemma}\label{lem:rectangle}
For every profile $\calP$ over $C$ that is single-crossing on the grid $V=[n_1]\times [n_2]$,
every $k\le |C|$, every canonical $k$-assignment $\opt$ for $\calP$ 
and every candidate $c$, it holds that the inverse image $w_{opt}^{-1}(c)$ defines a subrectangle of~$V$.
\end{lemma}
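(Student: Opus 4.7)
The plan is to mimic the strategy of Lemma~\ref{lem:subtree}, replacing ``paths in a tree'' with ``shortest paths in the grid''. I would proceed in two steps: first prove that $S := w_{opt}^{-1}(c)$ is \emph{geodesically convex}, i.e.\ contains every vertex lying on some shortest path between two of its elements; then deduce that geodesically convex subsets of a grid are precisely subrectangles.

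For the convexity step, I fix $a, b \in S$, pick any shortest $a$--$b$ path $X$ in the grid, and consider a voter $v$ strictly between $a$ and $b$ on $X$. Suppose for contradiction that $w_{opt}(v) = c' \neq c$. Since $w_{opt}$ is canonical and $c \in w_{opt}(V)$, one obtains $c' \succ_v c$; similarly, $c' \in w_{opt}(V)$ forces $c \succ_a c'$ and $c \succ_b c'$. By the definition of single-crossing on a median graph, the restriction of $\calP$ to the voters on $X$ is single-crossing on the line with the natural order induced by $X$, and the triple $c \succ_a c'$, $c' \succ_v c$, $c \succ_b c'$ is exactly the forbidden pattern along $X$, giving the desired contradiction.

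For the rectangle step, geodesic convexity immediately implies that $S$ is connected, so its projections onto the two coordinate axes are intervals $[a_1 : b_1]$ and $[a_2 : b_2]$. To finish, I would pick any $(i, j)$ with $a_1 \le i \le b_1$ and $a_2 \le j \le b_2$, together with witnesses $(i, j_0), (i_0, j) \in S$ for the two projections. The L-shaped path going from $(i, j_0)$ to $(i, j)$ by changing only the second coordinate and then to $(i_0, j)$ by changing only the first has length $|j - j_0| + |i - i_0|$, which equals the grid distance between its endpoints, so it is a shortest path. Geodesic convexity then forces $(i, j) \in S$, showing $S = [a_1 : b_1] \times [a_2 : b_2]$.

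I expect the convexity step to be essentially a mechanical translation of the tree argument. The one subtlety is that for median graphs the single-crossing property is postulated along \emph{each} shortest path separately, so one must fix a specific shortest path through $v$ before invoking it. The main conceptual point is the realisation that geodesically convex subsets of the grid are exactly subrectangles, which is what makes the L-shaped-path trick the natural way to finish.
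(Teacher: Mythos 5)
Your proof is correct and follows essentially the same approach as the paper: assume a voter inside the region is assigned some $c'\neq c$, place it on a shortest path between two voters assigned $c$, and derive the forbidden single-crossing pattern from canonicity. The only difference is organisational --- the paper works directly with the bounding box of $\opt^{-1}(c)$ and simply asserts that every point of it lies on a shortest path between two members of the set, whereas your convexity-then-projection argument with the L-shaped path actually supplies an explicit justification for that step.
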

\begin{proof}
Let $R$ be the smallest rectangle that contains the set of voters $\opt^{-1}(c)$ (i.e.~their bounding box). Assume for the sake of contradiction that there is a voter $(i, j) \in R$ such that $\opt(i, j) = c' \neq c$. Then, there are two voters $v_0, v_1 \in R \cap \opt^{-1}(c)$ such that there is a shortest $v_0$--$v_1$ path $P$ passing through $(i, j)$. Since $\opt$ is a canonical assignment, it means that $c \succ_{v_0} c'$ and $c \succ_{v_1} c'$, but $c' \succ_{(i, j)} c$, contradicting the assumption that $\calP$ is single-crossing on $[n_1]\times [n_2]$. 
\end{proof}

Lemma~\ref{lem:rectangle} establishes that a canonical $k$-assignment $\opt$ can be viewed as 
a partition of the grid into a collection $\calF_{\mathit{opt}}$ of at most $k$ rectangles. 
All voters in a given rectangle $R \in \calF_{\mathit{opt}}$ share a common representative 
$c$, which minimizes their total dissatisfaction. Let us call any partition of $V$ into at 
most $k$ subrectangles a \emph{$k$-tiling} (i.e.~$\calF_{\mathit{opt}}$ is a $k$-tiling). 
Now, just as in Section~\ref{sec:tree},
where we reduced {\sc CC-Winner-SCT} to CCTP, 
from now on we will focus on the problem of finding a $k$-tiling $\calF_{\mathit{opt}}$ 
that minimizes the total dissatisfaction of the voters in the implicitly associated $k$-assignment.

Next, we define a class of tilings that are particularly convenient for our purposes.

\begin{definition}\label{def:sliceable}
Let $\calF$ be a $k$-tiling of $V$. We say that $\calF$ is \emph{laminar} 
if at least one of the following conditions holds:
\begin{itemize}
    \item $\calF$ consists of a single rectangle.
    \item $\calF$ can be partitioned into two sets of rectangles $\calF_{\leftarrow}$ and $\calF_\rightarrow$ such that $\calF_{\leftarrow}$ is a laminar tiling of $[n_1] \times [1 : j]$ and $\calF_\rightarrow$ is a laminar tiling of $[n_1] \times [j + 1 : n_2]$, for some $j \in [n_2 - 1]$.
    \item $\calF$ can be partitioned into two sets of rectangles $\calF_\uparrow$ and $\calF_\downarrow$ such that $\calF_\uparrow$ is a laminar tiling of $[1 : i] \times [n_2]$ and $\calF_\downarrow$ is a laminar tiling of $[i + 1 : n_1] \times [n_2]$, for some $i \in [n_1 - 1]$.
\end{itemize}
\end{definition}

Intuitively, $\calF$ is laminar if it can be recursively subdivided by vertical/horizontal lines until we get to singleton rectangles. We now state a conjecture, which we will then show to imply that {\sc CC-Winner-SCT} is polynomial-time solvable.

\begin{conjecture}\label{conj}
For every instance of {\sc CC-Winner-SCG} there exists an 
optimal $k$-tiling that is laminar.
\end{conjecture}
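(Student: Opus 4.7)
The natural strategy is induction on the committee size $k$. The base case $k = 1$ is trivial, as the only tiling consists of $V$ itself, which is a single rectangle. For the inductive step, given an optimal $k$-tiling $\calF$ with $k \geq 2$, the plan is to produce an optimal $k$-tiling $\calF'$ (possibly equal to $\calF$) that admits a clean axis-aligned cut — i.e., a vertical line between columns $j$ and $j+1$ or a horizontal line between rows $i$ and $i+1$ that does not pass through the interior of any rectangle of $\calF'$ — and then invoke induction on the two subtilings, each of which is optimal for its induced subgrid by an interchange argument.

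The approach to finding a clean cut relies on an exchange argument. Consider the rectangle $R \in \calF$ that contains the top-left corner $(1,1)$, of dimensions $a \times b$. If $a = n_1$, then the vertical line between columns $b$ and $b + 1$ is a clean cut, and we are done; the case $b = n_2$ is symmetric. Otherwise, let $R'$ be the rectangle containing the cell $(1, b+1)$. If $R'$ does not extend below row $a$, the vertical line between columns $b$ and $b + 1$ is still clean within the top $a$ rows, and a symmetric analysis applies in the strip below. In the remaining case, $R'$ extends below row $a$, and we propose to redistribute voters between the representatives of $R$ and $R'$ along their shared boundary: by the single-crossing property on the grid combined with Lemma~\ref{lem:rectangle}, the candidates representing any two horizontally adjacent rectangles cross in preference at most once along any row, so shifting the vertical boundary should not increase total misrepresentation. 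Iterating this procedure on all ``corner'' rectangles is intended to flatten the tiling toward a laminar structure.

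The main obstacle is the presence of pinwheel (windmill) configurations: four rectangles arranged rotationally around a central region such that no axis-aligned line cleanly separates any one of them from the rest. These tilings are genuinely non-laminar in the combinatorial sense, and absent single-crossing structure they are perfectly valid. To prove the conjecture, one must show that such a pinwheel can never be strictly optimal under a single-crossing profile, or else that it can always be rearranged into a laminar tiling of equal cost. A plausible line of attack is to use the median-graph structure: in a pinwheel with representatives $c_1, c_2, c_3, c_4$, pick a voter $v$ in the central region and consider its medians with representative voters of the four surrounding rectangles; single-crossing along the resulting shortest paths should force inequalities among the $\rho(v, c_i)$ that are inconsistent with the pinwheel being canonical, thereby allowing a local swap that eliminates the windmill without raising cost. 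Executing this step cleanly is, I expect, the crux of the entire argument, and explains why the authors leave it as a conjecture while providing only a bicriterial approximation that avoids it.
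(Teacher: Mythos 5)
The statement you were asked to prove is not actually proved in the paper: it is stated as Conjecture~\ref{conj} and left open. The authors report only empirical verification for small grids ($n_1 \le 4$, $n_2 \le 5$, $k \le 5$), a direct proof for $k \le 4$, and a proof under the extra hypothesis that the preferences of grid-adjacent voters differ in at most one pair of candidates; their algorithmic result for grids is then stated conditionally on the conjecture, with an unconditional bicriterial approximation obtained via Lemma~\ref{can_split_optimum} instead. So there is no proof in the paper for your attempt to be compared against, and the honest verdict is that your proposal does not supply one either.

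Indeed, as you yourself acknowledge, what you have written is a plan rather than an argument, and the two steps that would have to do the real work are precisely the ones left unexecuted. First, the boundary-shifting exchange is not justified: moving voters across the border between two rectangles changes their representatives, and Lemma~\ref{lem:rectangle} only constrains \emph{canonical} assignments (where every voter already receives their favourite committee member); it gives no cost-neutrality guarantee when you re-draw the boundary of an arbitrary optimal tiling, and such a shift can also destroy the rectangularity of the neighbouring pieces, so the object you produce may no longer be a tiling at all. Second, and more importantly, you correctly identify pinwheel (windmill) configurations as the genuine obstruction to laminarity, but the proposed median-based attack on them is a heuristic: you would need to derive explicit inequalities among the values $\rho(v, c_i)$ from the single-crossing property along shortest paths and then exhibit a concrete, cost-non-increasing rearrangement into a laminar tiling. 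This is exactly the step the authors could not carry out in general — which is why the statement is a conjecture. You have located the crux accurately, which is worth something, but neither you nor the paper has crossed it.
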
 

There is empirical evidence that this conjecture is plausible: we have experimentally checked that it always holds for $n_1 \leq 4$, $n_2 \leq 5$ and $k \leq 5$, as well as for some other small instances. Additionally, for $k \leq 4$ there is a direct proof that our conjecture always holds. Furthermore, it can be shown to hold under the additional assumption that the preferences of every pair of voters that are adjacent in the grid differ in at most one pair of candidates. 

Assuming that the conjecture holds, we now propose a dynamic programming algorithm for {\sc CC-Winner-SCG}. We consider subproblems of the form $\dyp[i_0, i_1, j_0, j_1, \ell]$, representing the minimal possible dissatisfaction of voters in $[i_0 : i_1] \times [j_0 : j_1]$ for an $\ell$-tiling. These quantities are defined for $1 \leq i_0 \leq i_1 \leq n_1$, $1 \leq j_0 \leq j_1 \leq n_2$ and $1 \leq \ell \leq k$.

The following lemma presents the base case and recurrence relations that will be used to evaluate the dynamic program:

\begin{lemma} \label{grid_rec} 
Define the following three quantities:
\begin{align*}
\textit{CONST} = 
\min \{&\sum_{i = i_0}^{i_1}{\sum_{j = j_0}^{j_1}{\rho(i, j, c)}}: c\in [m]\};\\
\textit{VERT} = \min\{(&\dyp[i_0, i_1, j_0, j, \ell']\\
+ &\dyp[i_0, i_1, j + 1, j_1, \ell - \ell']):\\ 
&j_0 \leq j < j_1,  1 \leq \ell' < \ell\};\\
\textit{HOR} = 
\min\{(&\dyp[i_0, i, j_0, j_1, \ell'] \\
+ &\dyp[i + 1, i_1, j_0, j_1, \ell - \ell']):\\
&i_0 \leq i < i_1, 1 \leq \ell' < \ell\}.
\end{align*}
Then, it holds that 
$$
\dyp[i_0, i_1, j_0, j_1, \ell] = \min\{\textit{CONST}, \textit{VERT}, \textit{HOR}\}.
$$
\end{lemma}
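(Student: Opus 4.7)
The plan is to establish the equality by proving both directions separately; the $\le$ direction will be unconditional and constructive, while the $\ge$ direction will invoke Conjecture~\ref{conj} applied to the sub-instance on $[i_0:i_1]\times[j_0:j_1]$.

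For $\dyp[i_0, i_1, j_0, j_1, \ell] \le \min\{\textit{CONST}, \textit{VERT}, \textit{HOR}\}$, I would exhibit a valid $\ell$-tiling of the subrectangle realising each term. For $\textit{CONST}$, take the $1$-tiling consisting of the whole subrectangle with all voters represented by the minimising candidate: this is an $\ell$-tiling for every $\ell\ge 1$ and has cost exactly $\textit{CONST}$. For $\textit{VERT}$, concatenate tilings witnessing $\dyp[i_0, i_1, j_0, j, \ell']$ and $\dyp[i_0, i_1, j+1, j_1, \ell-\ell']$ into a single $\ell$-tiling of $[i_0:i_1]\times[j_0:j_1]$ whose total cost equals the sum of the two $\dyp$-values; the argument for $\textit{HOR}$ is symmetric. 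Taking minima over all valid choices of $j$ (or $i$) and $\ell'$ yields the desired inequality.

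For the reverse inequality, I would apply Conjecture~\ref{conj} with committee size $\ell$ to the sub-instance induced by the subgrid $[i_0:i_1]\times[j_0:j_1]$; the conjecture is applicable because every shortest path between two vertices of an axis-aligned subrectangle of a grid lies entirely inside that subrectangle, so $\calP$ restricted to the subgrid remains single-crossing with respect to it. The conjecture yields an optimal laminar $\ell$-tiling $\calF^*$ of the subrectangle with cost exactly $\dyp[i_0, i_1, j_0, j_1, \ell]$. A case split using Definition~\ref{def:sliceable} then completes the proof: if $\calF^*$ is a single rectangle, its cost equals $\min_c \sum_{(i,j)} \rho(i,j,c) = \textit{CONST}$; if $\calF^*$ admits a vertical split at some column $j$ into laminar sub-tilings of sizes $\ell'$ and $\ell-\ell'$, then each sub-tiling's cost is, purely by definition of the DP values as minima over tilings, at least the corresponding $\dyp$-value, so $\calF^*$'s total cost is at least one term in the $\textit{VERT}$ minimum, hence at least $\textit{VERT}$; the horizontal case is symmetric.

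The main obstacle I anticipate is justifying the transfer of Conjecture~\ref{conj} to arbitrary axis-aligned sub-instances, which reduces to the small observation above that shortest paths in the ambient grid are preserved by passing to subrectangles. Everything else is direct: the upper bound is an explicit construction and the lower bound is a three-way case analysis on the structure forced by laminarity. Notably, no induction on sub-DP values is needed, since each sub-tiling's cost is bounded below by the corresponding DP value directly from the definition of the latter as a minimum over all $\ell'$-tilings of the relevant subrectangle.
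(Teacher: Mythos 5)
Your proof is correct and follows essentially the same route as the paper's (which is only a brief sketch): the three terms correspond to the three clauses of Definition~\ref{def:sliceable}, the upper bound on $\dyp$ is an unconditional construction, and the matching lower bound rests on Conjecture~\ref{conj}. You additionally make explicit a point the paper glosses over --- that the conjecture must be invoked on the sub-instance induced by each subrectangle, which is legitimate because shortest paths in the grid stay within the bounding box of their endpoints, so the restricted profile remains single-crossing on the subgrid.
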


\begin{proof} 
The proof is very similar to some of our earlier proofs, so we only sketch the main details. 
Observe that $\textit{CONST}$, $\textit{VERT}$ and $\textit{HOR}$ correspond to the three 
clauses of Definition~\ref{def:sliceable}. The base case $\textit{CONST}$ computes the 
optimal dissatisfaction attainable by a single rectangle. This is done by iterating through 
all possible candidates and finding the corresponding dissatisfaction in each case. 
$\textit{VERT}$ picks a vertical separation line and a number $\ell'$ and assumes that the 
tiling is split into $\ell'$ rectangles to the left of this line and $\ell - \ell'$ 
rectangles to its right. For $\textit{HOR}$ the reasoning is similar. This approach returns 
an optimal solution as long as Conjecture~\ref{conj} holds.
\end{proof}

Note that, just as in the case of trees, the recurrences in Lemma~\ref{grid_rec} do not forbid a given candidate from being assigned to more than one rectangle in a tiling. 

\begin{theorem}\label{thm:grid}
Assuming that Conjecture~\ref{conj} holds, {\sc SC-Winner-SCG} can be solved in time polynomial in $n_1$, $n_2$, $m$, $k$.
\end{theorem}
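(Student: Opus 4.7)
The plan is to implement the recurrence of Lemma~\ref{grid_rec} as a bottom-up dynamic program indexed by sub-rectangles and committee sizes, and then bound its running time.

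For correctness, we would proceed by induction on the area of the sub-rectangle $R=[i_0:i_1]\times[j_0:j_1]$. Any such sub-rectangle induces a well-defined sub-instance of \textsc{CC-Winner-SCG}, since preferences restricted to the vertices of $R$ remain single-crossing on the corresponding sub-grid; hence Conjecture~\ref{conj} applied to this sub-instance guarantees the existence of an optimal $\ell$-tiling of $R$ that is laminar. By Definition~\ref{def:sliceable}, such a tiling either consists of a single rectangle---in which case its cost equals $\textit{CONST}$, obtained by picking a single best candidate---or admits a vertical (resp.\ horizontal) cut into two smaller laminar sub-tilings whose individual costs are captured by $\textit{VERT}$ (resp.\ $\textit{HOR}$) via the inductive hypothesis applied to strictly smaller rectangles. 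Taking the minimum over the three cases therefore yields $\dyp[i_0,i_1,j_0,j_1,\ell]$ equal to the true sub-instance optimum, and the overall answer is $\min_{1\le \ell\le k}\dyp[1,n_1,1,n_2,\ell]$.

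For the running time, we would first precompute, for every candidate $c\in[m]$, the two-dimensional prefix sums of $\rho(\cdot,\cdot,c)$ over the grid, at a one-off cost of $O(n_1 n_2 m)$. With these tables in hand, any $\textit{CONST}$ query can be answered in $O(m)$ time, by evaluating each candidate's total cost on $R$ in $O(1)$ via inclusion--exclusion and then taking the minimum. There are $O(n_1^2 n_2^2 k)$ DP entries (two endpoints per axis plus the committee size), and each entry takes $O(m)$ time for $\textit{CONST}$, $O(n_2 k)$ for $\textit{VERT}$, and $O(n_1 k)$ for $\textit{HOR}$. The overall running time is therefore $O\bigl(n_1 n_2 m + n_1^2 n_2^2 k\,(m+(n_1+n_2)k)\bigr)$, which is polynomial in $n_1, n_2, m, k$; a winning committee itself is recovered by standard backtracking through the DP tables.

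The only non-routine ingredient is Conjecture~\ref{conj} itself: once laminarity of \emph{some} optimum is granted for each sub-instance, the correctness argument reduces to a short structural induction and the rest is straightforward bookkeeping. The hard part is thus entirely quarantined in the conjecture, which is precisely why the theorem must be stated conditionally.
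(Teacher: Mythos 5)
Your proposal is correct and follows essentially the same dynamic program as the paper: process sub-rectangles in increasing order of size, apply the recurrence of Lemma~\ref{grid_rec}, and invoke Conjecture~\ref{conj} (on each sub-instance, which remains single-crossing on the corresponding sub-grid) to justify that the laminar case split is exhaustive. Your prefix-sum trick for $\textit{CONST}$ is exactly the ``tighter analysis'' the paper alludes to but omits, so it sharpens the stated bound without changing the argument.
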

\begin{proof} We process the subproblems in increasing order of $i_1 - i_0 + j_1 - j_0 + \ell$, solving each subproblem by directly applying the recurrence relations in 
Lemma~\ref{grid_rec}. For the time complexity, consider a fixed subproblem and note that computing $\textit{VERT}$ and $\textit{HOR}$ together takes time $O(n_1k + n_2k)$ and that computing $\textit{CONST}$ takes time $O(n_1 n_2 m)$. Therefore, the total time complexity can be bounded as $O(n_1^2n_2^2k(n_1k + n_2k + n_1n_2m))$, which is polynomial in the input size. One can improve this bound by tighter analysis (in particular, by being more diligent about how $\textit{CONST}$ is computed), but we omit these details for now.
\end{proof}

The usefulness of Theorem~\ref{thm:grid} is limited, as we do not know at this point whether 
Conjecture~\ref{conj} is true. However, we will now show that we can transform our dynamic 
programming algorithm into a bicriterial approximation algorithm for our problem even without 
assuming the conjecture. Specifically, we can use our algorithm to find a $k^2$-assignment 
$w$ such that the voters' dissatisfaction under $w$ is at most their dissatisfaction under an 
optimal $k$-assignment. We begin by introducing some definitions and notation:

\begin{definition}\label{def:refine}
Let $\calF$ and $\calF'$ be two tilings of $V$. We say that $\calF'$ \emph{refines} $\calF$ if every rectangle in $\calF$ can be represented as a union of some rectangles in $\calF'$.
\end{definition}

The following proposition follows immediately from Definition~\ref{def:refine}.

\setcounter{proposition}{1}
\begin{proposition} Let $\calF$ and $\calF'$ be tilings of $V$ such that $\calF'$ refines $\calF$. Then the total dissatisfaction of $\calF'$ is at most that of $\calF$ (i.e.~refinement can not increase dissatisfaction).
\end{proposition}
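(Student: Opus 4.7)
The plan is to exploit the fact that the dissatisfaction assigned to a rectangle in a tiling is defined as the minimum over all possible single representatives for that rectangle, so finer rectangles get to optimize independently and can only do at least as well.

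Concretely, I would proceed as follows. Fix any rectangle $R \in \calF$ and let $c_R \in [m]$ be a candidate that attains the minimum dissatisfaction for $R$, so that the cost that $\calF$ charges to $R$ is exactly $\sum_{v \in R} \rho(v, c_R)$. By hypothesis, $R$ can be written as a disjoint union $R = \bigsqcup_{R' \in \mathcal{S}_R} R'$ of rectangles from $\calF'$. For each such $R'$, let $c_{R'}$ be a candidate minimizing $\sum_{v \in R'} \rho(v, c)$ over $c \in [m]$; this is the cost that $\calF'$ charges to $R'$. Since $c_R$ is one of the candidates considered in this minimization, we have
\[
\sum_{v \in R'} \rho(v, c_{R'}) \;\le\; \sum_{v \in R'} \rho(v, c_R).
\]
Summing this inequality over $R' \in \mathcal{S}_R$ and using the disjoint union, the total dissatisfaction that $\calF'$ charges to rectangles inside $R$ is at most $\sum_{v \in R} \rho(v, c_R)$, which is exactly what $\calF$ charges to $R$. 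Finally, summing over all $R \in \calF$ (and noting that the collections $\mathcal{S}_R$ partition $\calF'$, since $\calF'$ is itself a partition of $V$) yields the claimed inequality between the total dissatisfactions.

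I do not anticipate any real obstacle: the proof is essentially a bookkeeping exercise on top of the observation that taking a minimum over a larger set (candidates for a big rectangle) cannot do better than taking independent minima over smaller sets (candidates for each sub-rectangle). The only thing to be careful about is that two distinct rectangles of $\calF$ cannot claim the same rectangle of $\calF'$, which is immediate because both $\calF$ and $\calF'$ are partitions of $V$.
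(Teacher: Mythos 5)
Your argument is correct and is precisely the reasoning the paper has in mind: the paper states the proposition ``follows immediately from Definition~\ref{def:refine}'' without spelling out a proof, and your bookkeeping --- reusing each coarse rectangle's optimal candidate on its sub-rectangles and then taking independent minima --- is exactly what makes that immediacy precise.
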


\begin{definition} 
Given a tiling $\calF$ of $V$ and a vertical/horizontal line ${\mathcal L}$ between two columns/rows of the grid, let $\textit{split}_{ \mathcal{L}}(\calF)$ be the tiling obtained from $\calF$ by replacing each rectangle whose interior intersects $\mathcal{L}$ with the two rectangles that $\mathcal{L}$ splits it into.
\end{definition}

Observe that $\textit{split}_{\mathcal{L}}(\calF)$ refines $\calF$. Also, if $\calF$ is a $k$-tiling, then $\textit{split}_{\mathcal{L}}(\calF)$ is a $2k$-tiling.

Given a tiling $\calF$, we can think of its rectangles as geometric objects in $\mathbb{Z}^2$. In particular, a given rectangle $[i_0 : i_1] \times [j_0 : j_1] \in \calF$ has corners at coordinates $(x, y) \in$ $\{(i_0 - 1, j_0 - 1)$, $(i_0 - 1, j_1)$, $(i_1, j_0 - 1)$, $(i_1, j_1)\}$.\footnote{To keep consistent with the way grids are normally drawn, the $\textit{OX}$ axis should be regarded as going from north to south and the $\textit{OY}$ axis from east to west.} With this in mind, define $P(\calF)$ to be the set of all corners of all rectangles in $\calF$. Likewise, define $x(\calF) = \{x : (x, y) \in P(\calF)\}$ and $y(\calF) = \{y : (x, y) \in P(\calF)\}$.

\begin{lemma} \label{can_split_optimum}
Let $\calF$ be a $k$-tiling of $V$. 
Then there exists a laminar $k^2$-tiling $\calF'$ which refines $\calF$.
\end{lemma}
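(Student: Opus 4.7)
The plan is to take $\calF'$ to be the \emph{full grid refinement} induced by the corner coordinates of $\calF$. Specifically, let $X = \{x_0 < x_1 < \dots < x_p\}$ and $Y = \{y_0 < y_1 < \dots < y_q\}$ be the sets of distinct first and second coordinates, respectively, occurring in $P(\calF)$; since the four corners of $V$ must lie in $P(\calF)$, we have $x_0 = 0$, $x_p = n_1$, $y_0 = 0$, $y_q = n_2$. Let $\calF'$ consist of the $pq$ cells $[x_{i-1} + 1 : x_i] \times [y_{j-1} + 1 : y_j]$ for $(i, j) \in [p] \times [q]$.

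Two of the three required properties are essentially immediate. First, $\calF'$ refines $\calF$: every rectangle $R \in \calF$ has all four of its corners in $X \times Y$ by construction, so cutting $V$ along every line $x = x_i$ and $y = y_j$ splits $R$ into a union of cells of $\calF'$. Second, $\calF'$ is laminar: apply the vertical-split clause of Definition~\ref{def:sliceable} iteratively at $y = y_1, \dots, y_{q-1}$ to peel off single-column strips $[n_1] \times [y_{j-1}+1 : y_j]$, then subdivide each such strip using the horizontal-split clause at $x = x_1, \dots, x_{p-1}$ down to singleton cells.

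The main obstacle is the size bound $|\calF'| = pq \le k^2$. By symmetry it suffices to prove $p \le k$. For each $x \in X$, let $L_x$ denote the number of rectangles in $\calF$ whose left edge lies at $x$; since every rectangle contributes exactly one left edge, $\sum_{x \in X} L_x = k$. Clearly $L_{n_1} = 0$ (no rectangle extends past $n_1$) and $L_0 \ge 1$ (the rectangle containing the cell $(1, 1)$ has its left edge at $0$). The key claim is that $L_{x_i} \ge 1$ for every interior coordinate $x_i$ with $1 \le i \le p - 1$: since $x_i \in X$, some rectangle $R \in \calF$ has an edge at $x = x_i$; if this is already a left edge we are done, and otherwise $R$ ends at $x_i$, in which case the rectangle $R'$ containing the cell just past $R$'s right edge in one of $R$'s own columns must have its left edge exactly at $x_i$, because any earlier left edge would force $R'$ to stretch back across $R$'s last row and overlap $R$, contradicting the tiling. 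Summing, $k = \sum_{x \in X} L_x \ge L_0 + \sum_{i=1}^{p-1} L_{x_i} \ge 1 + (p-1) = p$, and by symmetry $q \le k$, giving $|\calF'| = pq \le k^2$ as required.
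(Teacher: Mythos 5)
Your proof is correct and follows essentially the same route as the paper's: refine $\calF$ by cutting along every vertical and horizontal line through a corner point of $P(\calF)$, note that the resulting grid of cells is laminar by construction, and bound the number of distinct coordinate lines by $k+1$ in each direction. The only difference is that you spell out the counting argument for $p \le k$ (via the left-edge counts $L_x$ and the observation that every interior cut line must be the left edge of some rectangle), a step the paper asserts without proof as $\ell_x \leq k+1$; apart from the harmless slip that a $k$-tiling has \emph{at most} $k$ rectangles, so $\sum_{x} L_x \le k$ rather than $= k$, your details are sound.
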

\begin{proof} 
Let $\ell_x = |x(\calF)|$, $\ell_y=|y(\calF)|$, and assume that $x(\calF) = \{x_1, \ldots, x_{\ell_x}\}$ and $y(\calF) = \{y_1, \ldots, y_{\ell_y}\}$, where $x_1 < \ldots < x_{\ell_x}$ and $y_1 < \ldots < y_{\ell_y}$.
Initialize $\calF'$ to $\calF$ and perform the following steps:
\begin{enumerate}
    \item For each vertical line $\mathcal{L}$ defined by an element of $y(\calF)$, replace $\calF'$ by $\textit{split}_{\mathcal{L}}(\calF')$.
    \item For each horizontal line $\mathcal{L}$ defined by an element of $x(\calF)$, replace $\calF'$ by $\textit{split}_{\mathcal{L}}(\calF')$.
\end{enumerate}

Clearly, $\calF'$ refines $\calF$, since each individual operation is a refinement. Furthermore, $\calF'$ consists of precisely $(\ell_x-1)(\ell_y-1)$ rectangles, each with a lower left corner at $(x_i, y_j)$ and an upper right corner at $(x_{i + 1}, y_{j + 1})$, for some $1 \leq i < \ell_x$ and  $1 \leq j < \ell_y$. Since $\calF$ is a $k$-tiling, it follows that $\ell_x \leq k + 1$ and $\ell_y \leq k + 1$, implying that $(\ell_x-1)(\ell_y-1) \leq k^2$, and so $\calF'$ is a $k^2$-tiling. By construction, $\calF'$ is laminar, hence completing the proof.
\end{proof}

An approximation guarantee now follows as a direct consequence:

\begin{corollary}\label{cor:bicriterial} 
By executing our dynamic programming algorithm with the target committee size set to $k^2$ instead of $k$, we are guaranteed to find a committee that is at least as good as the optimum one for $k$. 
\end{corollary}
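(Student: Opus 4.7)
The plan is to chain together three facts already established in the paper: (i) the dynamic program of Lemma~\ref{grid_rec} computes the best \emph{laminar} tiling irrespective of Conjecture~\ref{conj}; (ii) Lemma~\ref{can_split_optimum} turns any $k$-tiling into a laminar $k^2$-tiling via refinement; and (iii) refinement never increases dissatisfaction (Proposition~2).

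First I would observe that the recurrence of Lemma~\ref{grid_rec} is, by construction, an exhaustive enumeration over laminar tilings: the $\textit{CONST}$ branch realizes the singleton case of Definition~\ref{def:sliceable}, while $\textit{VERT}$ and $\textit{HOR}$ enumerate all vertical and horizontal splits and recurse on each side. Hence, running the DP with target committee size $k^2$ unconditionally outputs the minimum dissatisfaction over all laminar $k^2$-tilings of $V$; let $\calF_{\mathit{DP}}$ denote an achieving tiling.

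Next, I would fix any optimal $k$-tiling $\calF^*$ of the grid (not necessarily laminar). Lemma~\ref{can_split_optimum} produces a laminar $k^2$-tiling $\calF'$ that refines $\calF^*$, and Proposition~2 yields that the dissatisfaction of $\calF'$ is at most that of $\calF^*$. Since $\calF_{\mathit{DP}}$ minimizes dissatisfaction over all laminar $k^2$-tilings, its dissatisfaction is at most that of $\calF'$, and therefore at most that of $\calF^*$, which is the desired conclusion.

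There is no real obstacle here: the entire argument is already packaged in the preceding results. The only point worth emphasizing in the write-up is that the appeal to Conjecture~\ref{conj} in the proof of Theorem~\ref{thm:grid} was needed solely to identify the laminar optimum with the true optimum; the bicriterial relaxation $k \mapsto k^2$ sidesteps this, because the true $k$-optimum itself admits a laminar $k^2$-refinement of no larger cost, and that refinement is visible to the DP.
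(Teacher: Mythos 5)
Your proof is correct and follows exactly the route the paper intends: the corollary is stated as a direct consequence of Lemma~\ref{can_split_optimum}, Proposition~2, and the fact that the dynamic program of Lemma~\ref{grid_rec} unconditionally optimizes over laminar tilings, which is precisely the chain you spell out. Your added remark that Conjecture~\ref{conj} was only needed to equate the laminar optimum with the true optimum, and is sidestepped by the $k \mapsto k^2$ relaxation, is an accurate and useful clarification of why the argument is conjecture-free.
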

Note that Corollary~\ref{cor:bicriterial} generalizes to $d$-grids with $d>2$. However, the 
required committee size becomes $k^d$, making this result less attractive for higher 
dimensions.

\section{Conclusions and Future Work}\label{sec:concl}
We have significantly improved the state of the art concerning the algorithmic complexity of 
the Chamberlin--Courant rule, both for preferences single-crossing on a line and for 
preferences single-crossing on a tree. For the former setting, the performance of our 
algorithms makes them suitable for a broad range of practical applications; for the latter 
setting, we identify an issue in prior work and present the first poly-time algorithm. It is 
instructive to contrast the algorithmic results for preferences single-crossing on trees and 
preferences single-peaked on trees: for the latter domain, positive results hold only if the 
underlying tree has a special structure, and the problem remains hard for general trees 
\cite{trees-full}, whereas our positive result holds for all trees. For the grids, while
we do not have an unconditional proof of correctness for our algorithm, we have collected
some evidence that our conjecture is true, and also found a way to leverage our approach
to design a bicriterial approximation algorithm.

In our paper, we focused on the utilitarian version of the Chamberlin--Courant rule, where 
the goal is to minimize the sum of voters' dissatisfactions; however, both of our $O(nmk)$ 
algorithms can be modified to compute winners under the {\em egalitarian} version of this 
rule, where the goal is to minimize the dissatisfaction of the most misrepresented voter, 
simply by replacing `+' with $\max$ in the respective dynamic programs. This is no longer the 
case for our reduction to the $k$-LPP problem; however, by using binary search to reduce the 
egalitarian problem to the utilitarian problem, we can nevertheless find solutions for the 
former in time $O(nm \log(n)\log(nm))$ using this approach.

It would be very interesting to extend our algorithmic results to general median graphs, 
or prove that such an extension is unlikely, by establishing an NP-hardness result. Resolving 
the conjecture about grid graphs would be a natural first step in this direction.

\bibliography{bib}

\appendix

\section{The Previous Algorithm for {\sc CC-Winner-SCT}}
\citeauthor{cps15}~[\citeyear{cps15}] present an algorithm for {\sc CC-Winner-SCT}
that proceeds by dynamic programming, 
building a solution for the entire tree from solutions for various subtrees. 
In this section we take a close look at their algorithm, 
showing how its runtime can be exponential in some cases.

We start by presenting the algorithm of \citet{cps15} (with a few typos corrected, and using 
our notation and terminology), and then show that its runtime can be exponential in the worst 
case, by exhibiting an explicit instance for which this occurs.

\subsection{A Recap of the Algorithm}
Fix a misrepresentation function $\rho$, and consider a profile $\calP=(\succ_v)_{v\in V}$, $|V|=n$, 
over $C=[m]$ that is single-crossing on a tree $T$. We will view $T$ as a rooted tree 
with $v_1$ as its root. A subtree $T' \subseteq T$ is a \emph{terminal subtree} 
if $T \setminus T'$ is also a subtree of $T$.

For any subtree of voters $T' \subseteq T$ with vertex set $V'$ and $a, b \in C$ such that 
$a\neq b$ we define $V'_{ab}=\{v \in V' : a \succ_v b\}$; let $T'_{ab}$ denote the subtree of 
$T'$ induced by $V'_{ab}$ (note that $T'_{ab}$ is a terminal subtree of $T'$ by the 
single-crossing property).

The following lemma proves key to the correctness of the algorithm.

\begin{customlemma}{4.5 (original number)}\label{arkadii_lemma} For some $k\in[m]$ 
let $\opt$ be a canonical $k$-assignment for $\calP$ and let $b \in C$ 
be the least favorite candidate of $v_1$. Then the vertices of $\opt^{-1}(b)$ 
form a terminal subtree of $T' \subseteq T$. Moreover, unless $T' = T$, 
we also have that $T' = T_{ba}$, for some $a\in C$ with $b \neq a$.
\end{customlemma}

See the original paper for the proof. Note that the lemma generalizes to subinstances 
where $T$ has been replaced by some $S \subseteq T$ that contains voter $v_1$ and/or 
some of the candidates in $C$ have been removed.
From now on, without loss of generality assume that $1\succ_{v_1}\ldots\succ_{v_1} m$.

For all $S \subseteq T$ such that $v_1 \in S$ and $j \in [m]$, $\ell \in [k]$ 
define $\dyp[S, j, \ell]$ to be the minimal dissatisfaction of voters in $S$ 
that can be achieved by selecting a size-$\ell$
committee out of candidates in $[j]$. For values of $S, j$ and $\ell$ other 
than the ones mentioned, we let $\dyp[S, j, \ell] = \infty$ for ease of explanation.

We omit the base cases; for the general case assume that $\ell \geq 2$, then following recurrence holds:
\begin{equation*}
\dyp[S, j, \ell] = \min\{\dyp[S, j - 1, \ell], X\},
\end{equation*}
where
\begin{equation*}
X = \min_{1 \leq a < j}{(\dyp[S_{aj}, j - 1, \ell - 1] + \sum_{v\in S_{ja}}{\rho(v, j)})}
\end{equation*}

Subproblems are solved in increasing order of $|S|$, breaking ties by increasing $j$. The overall minimum dissatisfaction is given by $\dyp[T, m, k]$ and a winning committee
can be computed using standard dynamic programming techniques.
The recurrence essentially stipulates that one will either not elect candidate $j$ in the sought size-$k$ committee (the $\dyp[S, j - 1, \ell]$ term), or they will, in which case candidate $j$ will represent a terminal subtree $S_{ja} \subseteq S$, as in Lemma~4.5 (the $X$ term).

The correctness of this approach is immediate from Lemma~4.5. On the other hand, 
it is already clear from this description that the number of values of $S$ may be exponential 
in $n$, leading to an exponential number of subproblems described above. However, this does 
not yet establish our claim: it might still be the case that only polynomially many 
subproblems would be considered in a top-down memoized implementation of the recurrence. We 
will now rule out this possibility, by presenting an explicit input instance where an 
exponential number of subproblems need to be considered.

\subsection{An Exponential Instance}
Our tree $T$ is a star graph, i.e., a graph with vertex set $V$, $|V| = n$, 
in which all vertices other than $v_1$ are leaves, and each leaf is connected to $v_1$. 
Let $v_1$ be the root vertex of this tree. Let $C = [n]$ and let $v_1$'s preferences 
be given by $1\succ_{v_1}\ldots\succ_{v_1} n$. The other voters' preferences 
are defined as follows: for each $i \in [2 : n]$, voter $v_i$ ranks candidate $i\in C$ first, 
followed by all other candidates, which are ranked in the same way as in $v_1$'s vote. 
It is immediate that the resulting profile is single-crossing on $T$. 
Figure~\ref{fig:counter} illustrates this construction for $n = 5$.

\begin{figure}
    \centering
    \includegraphics[scale=0.46]{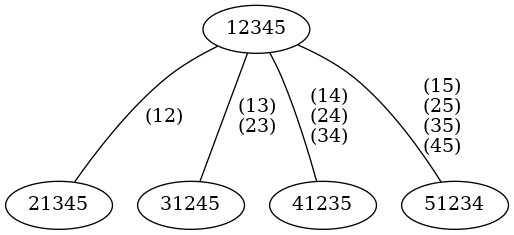}
    \caption{Voter $v_1$ at the top; voters $v_2, \ldots, v_n$ following below it, in order from left to right. For each voter $v$'s vertex, if its contents read $c_1c_2c_3c_4c_5$, then $c_1 \succ_{v} c_2 \succ_{v} c_3 \succ_{v} c_4 \succ_{v} c_5$. The labels on each edge denote preference \emph{cuts}. Namely, if an edge is labelled with $(cc')$, then that edge partitions $T$ into $T_{cc'}$ and $T_{c'c}$.} 
    \label{fig:counter}
\end{figure}

Following the execution of the algorithm, it is not difficult to see that for $k=n$ we will 
need to consider all subtrees of $T$ that contain $v_1$, proving our claim. Indeed, consider 
the first step of a recursive implementation of the recurrence, where $S = T$, $j = n$ and 
$\ell = k$. There are two possibilities: in one, $j$ is not picked to be part of the 
size-$\ell$ committee, in which case $S$ is left unchanged, $j$ is decremented and $\ell$ is 
left unchanged. In the other, $j$ is chosen to represent $S_{ja}$ for some $a \in [j - 1]$. 
By construction, observe that $S_{ja} = \{v_j\}$ regardless of the value of $a$, so $S$ 
changes to $S\setminus\{v_j\}$, $j$ is decremented and $\ell$ is decremented. Observe how $S$ 
is either left unchanged or $v_j$ is removed from it, with $j$ being decremented in both 
cases. This will happen recursively with $j - 1$, and so on, showing that, ultimately, $S$ 
will range over all subsets of $V$ that contain $v_1$ and have size at least $|V| - k + 1$. 
There are $\sum_{i = 0}^{k - 1}{n - 1\choose i}$ such subsets. For $k = n$, this quantity 
becomes $2^{n - 1}$, proving our claim.

Of course, one can argue that $k=n$ is a degenerate special case, as for this value of $k$ we can
simply include the top choice of every voter in the committee. However, the running time of 
the algorithm remains exponential for smaller values of $k$ as well: e.g., 
if $n$ is odd and $k=(n-1)/2$, we would need to consider all subtrees that 
can be obtained from $T$ by deleting at most half of the children of $v_1$, 
and there are $2^{n-2}$ such subtrees.

The original paper comes with an additional stipulation: ``pick the root $v_1$ to be a leaf 
in $T$'', which does not seem to be used implicitly or explicitly later on. In any case, to 
account for this, since our construction does not satisfy this requirement, one can add to 
$T$ an additional voter $v_1'$, duplicating $v_1$'s preferences, as well as an additional 
edge $v_1'$--$v_1$. Since duplicated preferences are also disallowed by the original paper, 
we can then add an additional candidate $m + 1$ to $C$. This candidate will be the most 
preferred candidate of $v_1'$, and the least preferred candidate of everyone else. With 
$v_1'$ now as the root of $T$, we obtain a tree that satisfies the assumptions of the 
original proof, and on which the running time is still exponential.

We leave open whether this algorithm can be revised so that only polynomially many 
subproblems need to be considered while preserving correctness.
\end{document}